\relax
\documentclass[a4paper,10pt]{article}


\usepackage{times}
\usepackage{soul}
\usepackage{url}
\usepackage[hidelinks]{hyperref}
\usepackage[utf8]{inputenc}
\usepackage[small]{caption}
\usepackage{graphicx}
\usepackage{amsmath}
\usepackage{amsthm}
\usepackage{booktabs}
\usepackage{algorithm}
\urlstyle{same}
\usepackage[noend]{algpseudocode}

\newtheorem{theorem}{Theorem}
%
\usepackage{newfloat}
\usepackage{listings}
\lstset{%
	basicstyle={\footnotesize\ttfamily},
	numbers=left,numberstyle=\footnotesize,xleftmargin=2em,
	aboveskip=0pt,belowskip=0pt,%
	showstringspaces=false,tabsize=2,breaklines=true}
\floatstyle{ruled}
\newfloat{listing}{tb}{lst}{}
\floatname{listing}{Listing}
%
%


\usepackage{tikz}
\usepackage{ifthen}
\usetikzlibrary{shapes.geometric, arrows}
\tikzstyle{process} = [rectangle, minimum width=3em, minimum height=2em, text centered, draw=black]
\tikzstyle{arrow} = [thick,->,>=stealth]
\usetikzlibrary{shapes,backgrounds,arrows,calc,positioning,snakes}

\usepackage{xspace}

\tikzstyle{every initial by arrow}=[initial text=] 
\tikzstyle{every state}=[fill=none,draw=black,text=black,inner sep=1pt,minimum size=1mm] 
\tikzstyle{every picture}=[->,>=stealth',shorten >=1pt,auto,node distance=1.3cm, semithick]

\tikzstyle{every node} =
[draw = none, fill = white, thin]
\tikzstyle{every edge} +=
[black, thick]

\tikzstyle{noall} =
[draw = none, fill = none]
\tikzstyle{nodraw} =
[draw = none, fill = white]
\tikzstyle{nofill} =
[draw = black, fill = none]

\tikzstyle{cnode} =
[circle, draw = black]
\tikzstyle{snode} =
[regular polygon, regular polygon sides = 4, draw = black]
\tikzstyle{lnode} =
[diamond, draw = black]

\newcommand{\Ag}{{Ag}}
\newcommand{\wf}[1]{\overline{#1}}
\newcommand{\myi}{{(i)}\xspace}
\newcommand{\myii}{{(ii)}\xspace}

\newcommand{\naww}[1]{{\langle\!\langle #1 \rangle\!\rangle}}
\newcommand{\all}[1]{{[\![ #1 ]\!]}}
\newcommand{\set}[1]{\{#1\}}

\newtheorem{example}{Example}
\newtheorem{definition}{Definition}
\newtheorem{lemma}{Lemma}

\newboolean{noednotes}
\setboolean{noednotes}{false}
\newcommand{\ednote}[1]{\ifthenelse{\boolean{noednotes}}{}{
    \marginpar{\colorbox{yellow}{\parbox{\linewidth}{\begin{center}\vspace{-3mm}\footnotesize\textsf{#1}\vspace{-3mm}\end{center}}}}}}
\newcommand{\signedednote}[2]{\ifthenelse{\boolean{noednotes}}{}{
\begin{center} \begin{minipage}{3in}
#2 --    #1
\end{minipage} \end{center}}}

%


\title{Towards the Verification of Strategic Properties in Multi-Agent Systems with Imperfect Information}
\author{
    Angelo Ferrando\textsuperscript{\rm 1} and
    Vadim Malvone\textsuperscript{\rm 2}\\
    {\small\textsuperscript{\rm 1}University of Genova, Italy}\\
    {\small\textsuperscript{\rm 2}T\'el\'ecom Paris, France}\\
    {\small angelo.ferrando@unige.it, vadim.malvone@telecom-paris.fr}
}

\date{}

 \begin{document}

\maketitle

\begin{abstract}
In logics for the strategic reasoning the main challenge is represented by their verification in contexts of imperfect information and perfect recall. In this work, we show a technique to approximate the verification of Alternating-time Temporal Logic (ATL$\!^*$) under imperfect information and perfect recall, which is known to be undecidable. Given a model $M$ and a formula $\varphi$, we propose a verification procedure that generates sub-models of $M$ in which each sub-model $M'$ satisfies a sub-formula $\varphi'$ of $\varphi$ and the verification of $\varphi'$ in $M'$ is decidable. Then, we use CTL$\!^*$ model checking to provide a verification result of $\varphi$ on $M$. We prove that our procedure is in the same class of complexity of ATL$\!^*$ model checking under perfect information and perfect recall,
we present a tool that implements our procedure,
and provide experimental results.
\end{abstract}

\section{Introduction}

A well-known formalism for reasoning about strategic behaviours in Multi-agent Systems (MAS) is Alternating-time Temporal Logic (ATL)~\cite{AHK02}. 
An important quality of ATL is its model checking complexity, which is PTIME-complete under perfect information. However, MAS in general have imperfect information and the model checking for ATL specifications under imperfect information and perfect recall is undecidable~\cite{DimaT11}. Given the relevance of the imperfect information setting, even partial solutions to the problem can be useful. Previous approaches have either focused on an approximation to perfect information~\cite{BelardinelliLM19,BelardinelliM20} or to bounded recall~\cite{BelardinelliLM18}.
In this contribution the main idea is to modify the topological structure of the models and use CTL verification.
With more detail, given a model $M$ and a specification $\varphi$, our procedure generates a set of sub-models of $M$ in which there is perfect information and each of these sub-models satisfies a sub-formula of $\varphi$. Then, we use CTL$\!^*$ model checking to check whether: (i) the universal remaining part of $\varphi$ is satisfied, (ii) the existential remaining part of $\varphi$ is not satisfied.
In both the cases we provide a preservation result to ATL$\!^*$. 
Since the original problem is undecidable, we can not guarantee that the truth or falsity of the property can be established. In other word, we cannot provide the completeness of our solution. However, the procedure provides a constructive method to evaluate an ATL$\!^*$ formula under imperfect information and perfect recall. We also show how such a procedure is still bounded to the class of complexity of ATL$\!^*$ model checking under perfect information and perfect recall.

\paragraph{Structure of the work.} The contribution is structured as follows. 
In Section \ref{sec:preliminaries}, we present the syntax of ATL$\!^*$ as well as its semantics given on concurrent game structures with imperfect information (iCGS). In Section \ref{example:rover}, we present our running example, the curiosity rover scenario. Subsequently, in Section \ref{sec:procedure}, we show our verification procedure by presenting how to capture sub-models with perfect information that satisfy a sub-formula of the original formula and then how this procedure can be used together with CTL$\!^*$ model checking to have results. We also provide evidence of the correctness and complexity of our procedure. Furthermore, the latter is used to implement an extension of MCMAS that we discuss in Section \ref{sec:tool}. We conclude by recapping our results and pointing to future works.

\paragraph{Related work.} Several approaches for the verification of specifications in ATL under imperfect information have been recently put forward.
In one line, a translation of ATL formulas with imperfect information and imperfect recall strategies that provide a
lower and upper bounds for their truth values is presented in \cite{JamrogaKKM19}.
In another line, restrictions are made on how information is shared amongst the agents, so as to retain decidability~\cite{BerthonMMRV17}. In a related line, interactions amongst agents are limited to public actions only \cite{BelardinelliLMR17,BLMRijcai17}. These approaches are markedly different from ours as they seek to identify classes for which verification is decidable. Instead, we consider the whole class of iCGS and define a general verification procedure. In this sense, our approach is related to~\cite{BelardinelliLM18} where an incomplete bounded recall method is defined and to~\cite{BelardinelliLM19,BelardinelliM20} where an approximation to perfect information is presented.
However, while in these works perfect recall or imperfect information are approximated, here we study the topological structure of the iCGS and use CTL$\!^*$ verification to provide a verification result.

\section{Preliminaries}\label{sec:preliminaries}

In this section we recall some preliminary notions. 

Given a set $U$, $\wf{U}$ denotes its complement. 
We denote the length of a tuple $v$ as $|v|$, and its $j$-th element as $v_j$.  
For $j \leq |v|$, let $v_{\geq j}$ be the suffix $v_{j},\ldots, v_{|v|}$ of $v$ starting at $v_j$ and $v_{\leq j}$ the prefix $v_{1},\ldots, v_{j}$ of $v$.

\subsection{Model}
We start by showing a formal model for Multi-agent Systems via concurrent game structures with imperfect information \cite{AHK02,Jamroga-Hoek}.
\begin{definition}
	\label{def:cgs}
A \emph{concurrent game structure with imperfect information (iCGS)} is a tuple $M \!\!=\!\! \langle \Ag,  AP,\allowbreak S, s_I, \{Act_i\}_{i \in \Ag}, \{\sim_i\}_{i \in \Ag}, d, \delta, V \rangle$ such that: 
\begin{itemize} 
	\item $\Ag = \set{1,\dots,m}$ is a nonempty finite set of agents.
	\item $AP$ is a nonempty finite set of atomic propositions.
	\item $S\neq \emptyset$ is a finite set of {\em states}, with {\em initial state}  $s_I \in S$.
	\item For every $i \in \Ag$, $Act_i$ is a nonempty finite set of {\em actions}. Let $Act = \bigcup_{i \in \Ag} Act_i$ be the set of all actions, and $ACT = \prod_{i \in \Ag} Act_i$ the set of all joint actions. 
	\item For every $i \in \Ag$, $\sim_i$ is a relation of {\em indistinguishability} between states. That is, given states $s, s' \in S$, $s \sim_i s'$ iff $s$ and $s'$ are indistinguishable for agent $i$.
	\item The {\em protocol function} $d: \Ag \times S \rightarrow (2^{Act}\setminus \emptyset)$ defines the availability of actions so that for every $i \in Ag$, $s \in S$, \myi $d(i,s) \subseteq Act_i$ and \myii $s \sim_i s'$ implies $d(i,s) = d(i,s')$.
	\item The {\em transition function} $\delta : S \times ACT \to S$ assigns a successor state $s' = \delta(s, \vec{a})$ to each state $s\in S$, for every joint action $\vec{a} \in ACT$ such that $a_i \in d(i,s)$ for every $i \in \Ag$.
	\item $V: S \rightarrow 2^{AP}$ is the {\em labelling function}.
 \end{itemize}
\end{definition}

By Def.~\ref{def:cgs} an iCGS describes the interactions of a group $\Ag$ of agents, starting from the initial state $s_I \in S$, according to the transition function $\delta$. The latter is constrained by the availability of actions to agents, as specified by the protocol function $d$. Furthermore, we assume that agents can have imperfect information of the exact state of the game; so in any state $s$, agent $i$ considers epistemically possible all states $s'$ that are $i$-indistinguishable from $s$ \cite{FHMV95}.
When every $\sim_i$ is the identity relation, \textit{i.e.}, $s \sim_i s'$ iff $s = s'$, we obtain a standard CGS with perfect information \cite{AHK02}. 


A history $h \in S^+$ is a finite (non-empty) sequence of states.  
The indistinguishability relations are extended to histories in a synchronous, point-wise way, \textit{i.e.}, histories $h, h' \in S^+$ are {\em
indistinguishable} for agent $i \in \Ag$, or $h \sim_i h'$, iff \myi $|h| = |h'|$ and \myii for all $j \leq |h|$, $h_j \sim_i h'_j$.

\subsection{Syntax}
We use ATL$\!^*$~\cite{AHK02} to reason about the strategic abilities of agents in iCGS.
\begin{definition}
	\label{def:ATL*}
State ($\varphi$) and path ($\psi$) formulas in ATL$\!^*$ are defined as
follows:
\begin{eqnarray*}
\varphi & ::= & q \mid \neg \varphi  \mid \varphi \land \varphi \mid \naww{\Gamma} \psi\\
\psi & ::= & \varphi \mid \neg \psi \mid \psi \land \psi \mid X \psi \mid (\psi U \psi)
\end{eqnarray*}
where $q \in AP$ and $\Gamma \subseteq \Ag$.

Formulas in ATL$\!^*$ are all and only the state formulas.
\end{definition}

As usual, a formula $\naww{\Gamma} \Phi$ is read as ``the agents in coalition $\Gamma$ have a strategy to achieve $\Phi$''. The meaning of temporal operators {\em next} $X$ and {\em until} $U$ is standard \cite{BaierKatoen08}.  Operators $\all{\Gamma}$, {\em release} $R$, {\em eventually} $F$, and {\em globally} $G$ can be introduced as usual.

Formulas in the $ATL$ fragment of ATL$\!^*$ are obtained from
Def.~\ref{def:ATL*} by restricting path formulas as follows:
\begin{eqnarray*}
\psi & ::= & X \varphi \mid (\varphi U \varphi) \mid (\varphi R \varphi)
\end{eqnarray*}
In the rest of the paper, we will also consider the syntax of ATL$^*$ in negative normal form (NNF):

	\begin{eqnarray*}
		\varphi & ::= & q \mid \neg q  \mid \varphi \land \varphi \mid \varphi \vee \varphi \mid \naww{\Gamma} \psi \mid \all{\Gamma} \psi\\
		\psi & ::= & \varphi \mid \psi \land \psi \mid \psi \vee \psi \mid X \psi \mid (\psi U \psi) \mid (\psi R \psi)
	\end{eqnarray*}

	where $q \in AP$ and $\Gamma \subseteq \Ag$.


\subsection{Semantics}
We assume that agents employ {\em uniform strategies} \cite{Jamroga-Hoek}, \textit{i.e.}, they perform the same action whenever they have the same information. 

\begin{definition}
	\label{uniformitybouned}
	A \emph{uniform strategy} for agent $i \in \Ag$ is a function $\sigma_i\!: \!S^+ \!\!\to\!\! Act_i$ such that for all histories $h, h' \!\in\! S^+ \!$, \myi
	$\sigma_i(h) \!\in\! d(i, last(h))$ and \myii
	$h \!\sim_i\! h'$ implies $\sigma_i(h) \!=\! \sigma_i(h')$.  
\end{definition}

By Def.~\ref{uniformitybouned} any strategy for agent $i$ has to return actions that are enabled for $i$. Also, whenever two histories are indistinguishable for $i$, then the same action is returned. Notice that, for the case of perfect information, condition \myii is satisfied by any strategy $\sigma$. Furthermore, we obtain memoryless (or imperfect recall) strategies by considering the domain of $\sigma_i$ in $S$, \textit{i.e.} $\sigma_i: S \to Act_i$.

Given an iCGS $M$, a {\em path} $p \in S^{\omega}$ is an infinite sequence of states.  
Given a joint strategy $\Sigma_\Gamma$, comprising of one strategy for each agent in coalition $\Gamma$, a path $p$ is \emph{$\Sigma_\Gamma$-compatible} iff for every $j \geq 1$, $p_{j+1} = \delta(p_j, \vec{a})$ for some joint action $\vec{a}$ such that for every $i \in \Gamma$, $a_i = \sigma_i(p_{\leq j})$, and for every $i \in \wf{\Gamma}$, $a_i \in d(i,p_j)$.
We denote with $out(s, \Sigma_\Gamma)$ the set of all $\Sigma_\Gamma$-compatible paths from $s$.

Now, we have all the ingredients to assign a meaning to ATL$\!^*$ formulas on iCGS. 
\begin{definition}
	\label{satisfaction}
The satisfaction relation $\models$ for an iCGS $M$, state $s \in S$, path $p \in S^{\omega}$, atom $q \in AP$, and ATL$\!^*$ formula $\phi$ is defined as follows (clauses for Boolean connectives are immediate and thus omitted):

\begin{tabbing}
$(M, s) \models q$ \ \ \ \ \ \ \ \ \ \ \=  iff \ \ \ \ \= $q \in {V}(s)$\\
$(M, s) \models \naww{{\Gamma}} \psi$  \>iff\> for some joint strategy $\Sigma_\Gamma$, \\ \> \> for all $p \!\in\! out(s, \Sigma_\Gamma)$,  $(M, p) \!\models\! \psi$\\
$(M, p) \models \varphi $ \>iff\> $(M, p_1) \models \varphi$ \\
$(M, p) \models X \psi$ \>iff\> $(M, p_{\geq 2}) \models \psi$\\
$(M, p) \models \psi U \psi'$  \>iff\> for some $k \geq 1$, $(M, p_{\geq k}) \!\models\! \psi'$, and \\ \> \> for all $1 \!\leq j <\! k, (M, p_{\geq j}) \!\models\! \psi$\\
\end{tabbing}
\end{definition}

We say that formula $\varphi$ is {\em true} in an iCGS $M$, or $M \models \varphi$, iff $(M, s_I) \models \varphi$.

To conclude, we state the model checking problem.
\begin{definition}
	Given an iCGS $M$ and a formula $\varphi$, the model checking problem concerns determining whether $M \models \varphi$.
\end{definition}

Since the semantics provided in Def.~\ref{satisfaction} is the standard interpretation of ATL$\!^*$ \cite{AHK02,Jamroga-Hoek}, it
is well known that model checking ATL, {\em a fortiori} ATL$\!^*$, against iCGS with imperfect information and  perfect recall is undecidable~\cite{DimaT11}.  


\section{Curiosity rover scenario}\label{example:rover}

In this section we present a variant of the Curiosity rover scenario under imperfect information and perfect recall. 
The Curiosity rover is one of the most complex systems successfully deployed in a planetary exploration mission to date. Its  main objectives include recording image data and collecting soil/rock data. Differently from the original \cite{Curiosity}, in this example the rover is equipped with decision making capabilities, which make it autonomous.
We simulate an inspection mission, where the Curiosity patrols a topological map of the surface of Mars.

In Figure~\ref{fig:curiosity}, we model an example of mission for the rover. The rover starts in state $s_I$, and it has to perform a setup action, consisting in checking the three main rover's components: arm, mast, or the wheels. To save time and energy, the mechanic (the entity capable of performing the setup checks and making any corrections) can perform only one setup operation per mission. Since such operation is not known by the rover, from its point of view, states $s_1$, $s_2$, $s_3$ are equivalent, i.e. it cannot distinguish between the three setup operations.
After the selection, the mechanic can choose to check and eventually correct the component (action $ok$) or to decline the operation (action $nok$). In the former case the rover can continue the mission while in the latter the mission terminates with an error.
In case the mechanic collaborates, the rover can start the mission. We consider with state $s_4$ the fact that the rover is behind the ship. So, its aim is to move from its position, to make a picture of a sample rock of Mars and then to return to the initial position. More in detail, from the state $s_4$, it can decide to move left ($L$) moving to state $s_6$, or right ($R$) moving to state $s_7$. In these two states the rover can make a photo of a sample rock (action $mp$). After this step, the rover has to conclude the mission by returning behind the ship. To accomplish the latter, it needs to do the complement moving action to go back to the previously visited state ($R$ from $s_6$, and $L$ from $s_7$).

Formally, the model of the mission is defined as: $M = \langle \Ag, AP, S, s_I, \{Act_i\}_{i \in \Ag}, \allowbreak \{\sim_i\}_{i \in \Ag},  d,  \delta,  V \rangle, \textit{ where } \Ag = \{rover, mechanic\}, S = \{ s_I,  s_1, s_2, s_3, s_4, s_5, s_6, \allowbreak s_7, s_8, e_1, e_2 \}, Act_{rover} = \{ chk, L, \allowbreak R, mp,  i \}, Act_{mechanic} = \{ ca, cm, cw,  ok, nok, \allowbreak i \}, \textit{ and } \allowbreak AP =\{ sp, cp, oc, rm, rp, ip, pl, pr\}$.
The atomic propositions have the following meaning: $sp$ stays for starting position, $cp$ stays for check phase, $oc$ stays for ok check phase, $rm$ stays for ready to mission, $rp$ stays for ready to make a picture, $ip$ stays for in position, $pl$ stays for picture left, and $pr$ stays for picture right. The labelling function is defined as follows: $V(s_I) = \{ sp \}$, $V(s_1) = V(s_2) = V(s_3) = \{ cp \}$, $V(s_4) = \{ oc,rm \}$, $V(s_5) = \{ pl\}$, $V(s_6) = V(s_7) = \{ rp, ip \}$, $V(s_8) = \{ pr \}$, and $V(e_1) = V(e_2) = \emptyset$.
There are the following non trivial equivalence relations $s_1 \sim_{rover} s_2$, $s_1 \sim_{rover}s_3$, and $s_2 \sim_{rover} s_3$. 
Finally, the functions $d$ and $\delta$ can be derived by Figure~\ref{fig:curiosity}. 

So, given the model $M$ we can define several specifications. For example, the specification that describes the rover mission is $\varphi_1 = \naww{rover} F ((oc \land rm) \land  F ((pl \vee pr) \land F (oc \land rm)))$. In words the latter formula means that there exists a strategy for the rover that sooner or later it can be ready to start the mission, can make a picture of a sample rock, and can return behind the ship.
In the latter formula, we assume that the mechanic can decide not to cooperate. In this case, it is impossible for the rover to achieve the end of the mission even using memoryfull strategies.
If we assume the cooperation of the mechanic (i.e. the mechanic checks a component and eventually corrects it via action $ok$), we can rewrite the specification as $\varphi_2 = \naww{rover, mechanic} F ((oc \land rm) \land  \naww{rover} F ((pl \vee pr) \land F (oc \land rm)))$.
In this case, by using memoryless strategies the formula $\varphi_2$ still remains false since the rover can select only one action on states $s_6$ and $s_7$, so it is impossible for it to first make the picture and then go to the end of the mission. 
While, by using memoryfull strategies, the rover has a strategy to make the formula true by considering the cooperation of the mechanic. In fact, a simple strategy $\sigma$ can be generated by: $\sigma(s_I)= chk$, $\sigma(s_I s_j)=i$, $\sigma(s_I s_j s_4) = L$, $\sigma(s_I s_j s_4 s_6) = mp$, $\sigma(s_I s_j s_4 s_6 s_5) = i$, and $\sigma(s_I s_j s_4 s_6 s_5 s_6) = R$, where $j \in \{1,2,3\}$.
With the above observations, we can conclude that, to verify the specification $\varphi_2$ in the model $M$, we need of the model checking of ATL in the context of imperfect information and perfect recall, a problem in general undecidable. 
Other interesting specifications can involve specific rover's status. For example, we could be interested to verify if there exists a strategy for the rover in coalition with the mechanic such that it can be ready to make a picture but it is not in position to do that and, from that point, if it has the ability to make a picture and return in the starting mission, i.e. we want to verify the formula $\varphi_3 = \naww{rover, mechanic} F (rp \land \neg ip \land \naww{rover} F ( (pl \vee pr) \land F (oc \land rm)))$. 
It is easy to see that $\varphi_3$ is false in the model $M$ because there is not a state in which $rp \land \neg ip$. However, to check $\varphi_3$ in the model $M$, we need of the model checking of ATL in the context of imperfect information and perfect recall, a problem in general undecidable.

In the next section we will present a sound but not complete procedure to handle this class of problems and use the curiosity rover scenario to help the reader during each step.

\begin{figure}[t]\centering
	\scalebox{0.80}{
		\begin{tikzpicture}
		\node [cnode]
		(S0)
		{\footnotesize $\stackrel{sp}{s_I}$};
		\node [cnode]
		(S3)
		[below of = S0, node distance = 8em]
		{\footnotesize $\stackrel{cp}{s_3}$};
		\node [cnode]
		(S1)
		[below left of = S0, node distance = 6em]
		{\footnotesize $\stackrel{cp}{s_1}$};
		\node [cnode]
		(S2)
		[below right of = S0, node distance = 6em]
		{\footnotesize $\stackrel{cp}{s_2}$};
		\node [cnode]
		(S4)
		[below of = S3, node distance = 5em]
		{\footnotesize $\stackrel{oc,rm}{s_4}$};
		\node [cnode]
		(E2)
		[below of = S4, node distance = 7em]
		{\footnotesize $\stackrel{\emptyset}{e_2}$};
		\node [cnode]
		(S6)
		[left of = E2, node distance = 6em]
		{\footnotesize $\stackrel{rp,ip}{s_6}$};
		\node [cnode]
		(S7)
		[right of = E2, node distance = 6em]
		{\footnotesize $\stackrel{rp,ip}{s_7}$};
		\node [cnode]
		(S5)
		[left of = S6, node distance = 7em]
		{\footnotesize $\stackrel{pl}{s_5}$};
		\node [cnode]
		(S8)
		[right of = S7, node distance = 7em]
		{\footnotesize $\stackrel{pr}{s_8}$};
		\node [cnode]
		(E1)
		[above of = S8, node distance = 12em]
		{\footnotesize $\stackrel{\emptyset}{e_1}$};

		\path[-stealth']
		(S4)  edge  [pos = 0.5, bend right = 30]
		node [above = -0.2] {\tiny $(L,i)$}
		(S6)
		edge  [pos = 0.5, bend left = 30]
		node [above = -0.2] {\tiny $(R,i)$}
		(S7)
		edge	[pos = 0.5, loop below]
		node [] {\tiny $(i,i)$}
		()
		
		(S0)
		edge    [pos = 0.5, bend right =30]
		node    [above = -0.2] {\tiny $(chk,ca)$}
		(S1)
		edge    [pos = 0.25]
		node    [above = -0.2] {\tiny $(chk,cm)$}
		(S3)
		edge    [pos = 0.5, bend left = 30]
		node    [above = -0.2] {\tiny $(chk,cw)$}
		(S2)
		edge	[pos = 0.5, loop above]
		node [] {\tiny $(i,\star)$}
		()
		
		(S1)
		edge    [pos = 0.5, bend right = 15]
		node    [above = -0.2] {\tiny $(i,ok)$}
		(S4)
		edge    [pos = 0.7]
		node    [above = -0.2] {\tiny $(i,nok)$}
		(E1)
		
		(S3)
		edge    [pos = 0.4]
		node    [above = -0.2] {\tiny $(i,ok)$}
		(S4)
		edge    [pos = 0.5]
		node    [above = -0.2] {\tiny $(i,nok)$}
		(E1)
		
		(S2)
		edge    [pos = 0.5, bend left = 15]
		node    [above = -0.2] {\tiny $(i,ok)$}
		(S4)
		edge    [pos = 0.4]
		node    [above = -0.2] {\tiny $(i,nok)$}
		(E1)
		
		(S6)	
		edge	[pos = 0.4]
		node [above = -0.2] {\tiny $(L,i)$}
		(E2)
		edge	[pos = 0.4, bend right = 15]
		node [above = -0.2] {\tiny $(mp,i)$}
		(S5)
		edge	[pos = 0.4, bend right = 15]
		node [above = -0.2] {\tiny $(R,i)$}
		(S4)
		
		(E2)	
		edge	[pos = 0.5, loop above]
		node [] {\tiny $(\star,\star)$}
		()
		
		(E1)	
		edge	[pos = 0.5, loop below]
		node [] {\tiny $(\star,\star)$}
		()
		
		(S7)	edge	[pos = 0.4]
        node [above = -0.2] {\tiny $(R,i)$}
		(E2)
		edge	[pos = 0.4, bend left = 15]
		node [above = -0.2] {\tiny $(mp,i)$}
		(S8)
		edge	[pos = 0.4, bend left = 15]
		node [above = -0.2] {\tiny $(L,i)$}
		(S4)
		
		(S5)
		edge	[pos = 0.5, bend right = 15]
		node [above = -0.2] {\tiny $(i,i)$}
		(S6)

		(S8)
		edge	[pos = 0.5, bend left = 15]
		node [above = -0.2] {\tiny $(i,i)$}
		(S7)

		;
		
		\draw[dotted, -] (S1) -- (S3);
		
		\draw[dotted, -] (S3) -- (S2);
		
		\draw[dotted, -] (S1) -- (S2);
		
%
%

		\end{tikzpicture}
	} 
	 \caption{An example of the rover's mission where $i$ stands for the idle action, $\star$ for any action, and the rover's equivalence relation is denoted with the dotted lines.} \label{fig:curiosity} 
\end{figure}
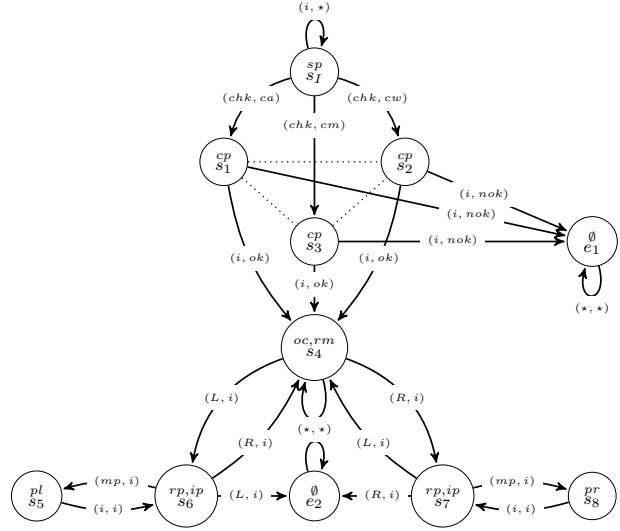

\section{Our procedure}\label{sec:procedure}
In this section, 
we provide a partial procedure to decide games with imperfect information and perfect recall strategies, a problem in general undecidable. 
%
To do this, given a model $M$ and a formula $\varphi$ in ATL$\!^*$, we need of the following three steps:

\begin{enumerate}
	\item to find the sub-models of $M$ in which there is perfect information;
	\item to check over the above sub-models the sub-formulas $\varphi'$ of $\varphi$;
	\item to use CTL$\!^*$ model checking to check the temporal remaining part $\psi$ of $\varphi$.
\end{enumerate}

Before providing these three points, we need to introduce a preprocessing procedure.

\subsection{Phase 0: preprocessing}\label{sec:prepro}

Here, we present the phase zero of our procedure.
%
%
Informally, given an iCGS $M$ and an ATL$\!^*$ formula $\varphi$, Algorithm \ref{alg:prepro} rewrites $\varphi$ in its equivalent Negative Normal Form (NNF), replaces the negated atoms by additional atoms, and updates the model $M$ accordingly.
With more detail, first, the NNF of $\varphi$ is generated and all the negated atoms are extracted (lines 1-2). Then, for each negated atom $\neg p$, the algorithm generates a new atomic proposition $np$ (line 4), add $np$ to the set of atomic propositions $AP$ (line 5), updates the formula (line 6), and updates the labeling function of $M$. For the latter, for each state $s$ of $M$, the algorithm checks if $p$ is false in $s$; if this is the case, it adds $np$ to the set of atomic propositions true on $s$ (lines 7-9). 

\begin{algorithm}
	\footnotesize
	\caption{$Preprocessing$ ($M$, $\varphi$)}
	\label{alg:prepro}
	\begin{algorithmic}[1] 
		\State $\varphi = NNF(\varphi)$;
		\State $neg$-$atoms$ = extract negated atoms from $\varphi$;
		\For{$\neg p \in neg$-$atoms$}
		\State generate atom $np$;
		\State $AP = AP \cup np$;
		\State replace $\neg p$ in $\varphi$ with $np$;
		\For{$s \in S$} 
		\If {$p \not\in V(s)$}
		\State $V(s) = V(s) \cup np$;
		\EndIf
		\EndFor
		\EndFor
	\end{algorithmic}
\end{algorithm}

\subsection{Phase 1: find sub-models}\label{sec:submodels}

Here, we present how to find the sub-models with perfect information inside the original model.
Before illustrating the algorithm we give the formal definitions of the classes of sub-models we analyze.
\begin{definition}[Negative sub-model]\label{subneg}
	Given an iCGS $M = \langle \Ag, AP, S, s_I,  \{Act_i\}_{i \in \Ag}, \allowbreak \{\sim_i\}_{i \in \Ag}, d, \delta, V \rangle$, we denote with $M' = \langle \Ag, AP, S', s'_I, \{Act_i\}_{i \in \Ag}, \{\sim'_i\}_{i \in \Ag}, d', \allowbreak \delta',  V' \rangle$ a negative sub-model of $M$, formally $M' \subseteq M$, such that:
	\begin{itemize}
		\item the set of states is defined as $S' = S^\star \cup \{s_\bot\}$, where $S^\star \subseteq S$, and $s'_I \in S^\star$ is the initial state.
		\item $\sim'_i$ is defined as the corresponding $\sim_i$ restricted to $S^\star$.
		\item The protocol function is defined as $d' : \Ag \times S' \rightarrow (2^{Act}\setminus\emptyset)$, where $d'(i,s) = d(i,s)$, for every $s \in S^\star$ and $d'(i,s_\bot) = Act_i$, for all $i \in \Ag$.
		\item The transition function is defined as $\delta' : S' \times ACT \rightarrow S'$, where given a transition $\delta(s, \vec{a}) = s'$, if $s, s' \in S^\star$ then $\delta'(s, \vec{a}) = \delta(s, \vec{a}) = s'$ else if $s' \in S \setminus S^\star$ and $s \in S'$ then $\delta'(s, \vec{a}) = s_\bot$. 
		\item for all $s\in S^\star$, $V'(s) = V(s)$ and $V'(s_\bot) = \emptyset$.
	\end{itemize}
\end{definition}

\begin{definition}[Positive sub-model]\label{subpos}
	Given an iCGS $M = \langle \Ag, AP, S, s_I,  \{Act_i\}_{i \in \Ag}, \allowbreak \{\sim_i\}_{i \in \Ag}, d, \delta, V \rangle$, we denote with $M' = \langle \Ag, AP, S', s'_I, \{Act_i\}_{i \in \Ag}, \{\sim'_i\}_{i \in \Ag}, d', \allowbreak \delta', V' \rangle$ a positive sub-model of $M$, formally $M' \subseteq M$, such that:
	\begin{itemize}
		\item the set of states is defined as $S' = S^\star \cup \{s_\top\}$, where $S^\star \subseteq S$, and $s'_I \in S^\star$ is the initial state.
		\item $\sim'_i$ is defined as the corresponding $\sim_i$ restricted to $S^\star$.
		\item The protocol function is defined as $d' : \Ag \times S' \rightarrow (2^{Act}\setminus\emptyset)$, where $d'(i,s) = d(i,s)$, for every $s \in S^\star$ and $d'(i,s_\top) = Act_i$, for all $i \in \Ag$.
		\item The transition function is defined as $\delta' : S' \times ACT \rightarrow S'$, where given a transition $\delta(s, \vec{a}) = s'$, if $s, s' \in S^\star$ then $\delta'(s, \vec{a}) = \delta(s, \vec{a}) = s'$ else if $s' \in S \setminus S^\star$ and $s \in S'$ then $\delta'(s, \vec{a}) = s_\top$. 
		\item for all $s\in S^\star$, $V'(s) = V(s)$ and $V'(s_\top) = AP$.
	\end{itemize}
\end{definition}
Note that, the above sub-models are still iCGSs.

The procedure to generate the set of positive and negative sub-models of $M$ with perfect information for the agents involved in the coalitions of $\varphi$ is presented in Algorithm \ref{alg:submodels}. 
Informally, the algorithm takes an iCGS $M$ and an ATL$\!^*$ formula $\varphi$.
The algorithm works with two sets. The set $substates$ contains the sets of states of $M$ that have to be evaluated and the set $candidates$ contains couples of sub-models of $M$ with perfect information. Each couple represents the positive and negative sub-models for a given subset of states of $M$.
In the first part of the algorithm (lines 1-4) the set of states and the indistinguishability relation are extracted from the model $M$ and the lists $substates$ and $candidates$ are initialised to $S$ and $\emptyset$, respectively. The main loop (lines 5-15) works until the list $substates$ is empty. In each iteration an element $\bar{S}$ is extracted from $substates$ and if there is an equivalence relation over the states in $\bar{S}$ for some agent $i \in Ag(\varphi)$\footnote{$Ag(\varphi)$ is the set of agents involved in the coalitions of the formula $\varphi$.} then two new subsets of states are generated to remove this relation (lines 8-11), otherwise $\bar{S}$ is a set of states with perfect information for the agents in $Ag(\varphi)$. So, from $\bar{S}$ two models are generated, a positive sub-model as described in Definition \ref{subpos} and a negative sub-model as described in Definition \ref{subneg}, and the couple is added to the list $candidates$. The latter is returned at the end of Algorithm~\ref{alg:submodels} (line 16).

\begin{algorithm}
	\footnotesize
	\caption{$FindSub$-$models$ ($M$, $\varphi$)}
	\label{alg:submodels}
	\begin{algorithmic}[1] 
		\State extract $\sim$ from $M$;
		\State extract $S$ from $M$;
		\State $candidates = \emptyset$;
		\State $substates =  \{S\}$;
		\While {$substates$ is not empty}
		\State extract $\bar{S}$ from $substates$;
		\If {there exists $s \sim_i s'$ with $s \neq s'$ and $i \in Ag(\varphi)$}
		\State $S_1 = \bar{S} \setminus \{s\}$;
		\State $substates = S_1 \cup substates$;
		\State $S_2 = \bar{S} \setminus\{s'\}$;
		\State $substates = S_2 \cup substates$;
		\Else
		\State $M_n = GenerateNegative(M, \bar{S})$;
		\State $M_p = GeneratePositive(M, \bar{S})$;
		\State $candidates = \langle M_n, M_p \rangle \cup candidates$;
		\EndIf
		\EndWhile
		\State \Return $candidates$;
	\end{algorithmic}
\end{algorithm}

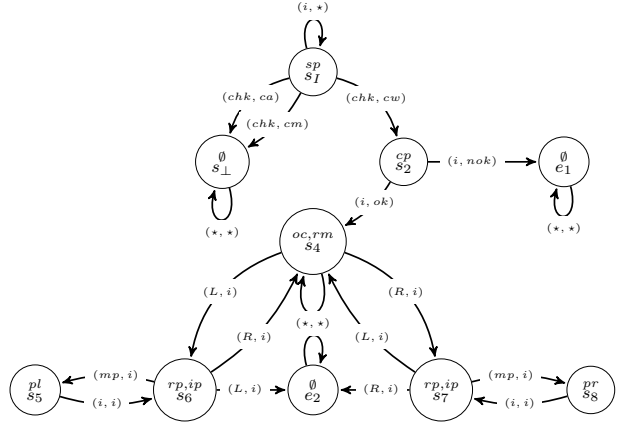
\begin{figure}[t]\centering
	\scalebox{0.80}{
		\begin{tikzpicture}
		\node [cnode]
		(S0)
		{\footnotesize $\stackrel{sp}{s_I}$};
		\node [cnode]
		(Sink)
		[below left of = S0, node distance = 6em]
		{\footnotesize $\stackrel{\emptyset}{s_\bot}$};
		\node [cnode]
		(S2)
		[below right of = S0, node distance = 6em]
		{\footnotesize $\stackrel{cp}{s_2}$};
		\node [cnode]
		(S4)
		[below of = S0, node distance = 8em]
		{\footnotesize $\stackrel{oc,rm}{s_4}$};
		\node [cnode]
		(E2)
		[below of = S4, node distance = 7em]
		{\footnotesize $\stackrel{\emptyset}{e_2}$};
		\node [cnode]
		(S6)
		[left of = E2, node distance = 6em]
		{\footnotesize $\stackrel{rp,ip}{s_6}$};
		\node [cnode]
		(S7)
		[right of = E2, node distance = 6em]
		{\footnotesize $\stackrel{rp,ip}{s_7}$};
		\node [cnode]
		(S5)
		[left of = S6, node distance = 7em]
		{\footnotesize $\stackrel{pl}{s_5}$};
		\node [cnode]
		(S8)
		[right of = S7, node distance = 7em]
		{\footnotesize $\stackrel{pr}{s_8}$};
		\node [cnode]
		(E1)
		[right of = S2, node distance = 7.5em]
		{\footnotesize $\stackrel{\emptyset}{e_1}$};

		\path[-stealth']
		(S4)  edge  [pos = 0.5, bend right = 30]
		node [above = -0.2] {\tiny $(L,i)$}
		(S6)
		edge  [pos = 0.5, bend left = 30]
		node [above = -0.2] {\tiny $(R,i)$}
		(S7)
		edge	[pos = 0.5, loop below]
		node [] {\tiny $(i,i)$}
		()
		
		(S0)
		edge    [pos = 0.5, bend right = 30]
		node    [above = -0.2] {\tiny $(chk,ca)$}
		(Sink)
		edge    [pos = 0.5, bend left = 15]
		node    [above = -0.2] {\tiny $(chk,cm)$}
		(Sink)
		edge    [pos = 0.5, bend left = 30]
		node    [above = -0.2] {\tiny $(chk,cw)$}
		(S2)
		edge	[pos = 0.5, loop above]
		node [] {\tiny $(i,\star)$}
		()
		
		(Sink)
		edge	[pos = 0.5, loop below]
		node [] {\tiny $(\star,\star)$}
		()
		
		(S2)
		edge    [pos = 0.4, bend left = 15]
		node    [above = -0.2] {\tiny $(i,ok)$}
		(S4)
		edge    [pos = 0.4]
		node    [above = -0.2] {\tiny $(i,nok)$}
		(E1)
		
		(S6)	
		edge	[pos = 0.4]
		node [above = -0.2] {\tiny $(L,i)$}
		(E2)
		edge	[pos = 0.4, bend right = 15]
		node [above = -0.2] {\tiny $(mp,i)$}
		(S5)
		edge	[pos = 0.4, bend right = 15]
		node [above = -0.2] {\tiny $(R,i)$}
		(S4)
		
		(E2)	
		edge	[pos = 0.5, loop above]
		node [] {\tiny $(\star,\star)$}
		()
		
		(E1)	
		edge	[pos = 0.5, loop below]
		node [] {\tiny $(\star,\star)$}
		()
		
		(S7)	edge	[pos = 0.4]
        node [above = -0.2] {\tiny $(R,i)$}
		(E2)
		edge	[pos = 0.4, bend left = 15]
		node [above = -0.2] {\tiny $(mp,i)$}
		(S8)
		edge	[pos = 0.4, bend left = 15]
		node [above = -0.2] {\tiny $(L,i)$}
		(S4)
		
		(S5)
		edge	[pos = 0.5, bend right = 15]
		node [above = -0.2] {\tiny $(i,i)$}
		(S6)

		(S8)
		edge	[pos = 0.5, bend left = 15]
		node [above = -0.2] {\tiny $(i,i)$}
		(S7)

		;

		\end{tikzpicture}
	} 
	 \caption{A negative sub-model generated by Algorithm \ref{alg:submodels} from model $M$ depicted in Figure \ref{fig:curiosity}.} \label{fig:ex-perfect-info} 
\end{figure}

\begin{example}
	In Figure~\ref{fig:ex-perfect-info}, we show a candidate negative sub-model $M_n$ extracted from the rover mission (see the model in Figure~\ref{fig:curiosity}) using Algorithm~\ref{alg:submodels}. 
	Note that, we can generate the positive sub-model counterpart by replacing the state $s_\bot$ with $s_\top$ and following the labelling rule described in Definition \ref{subpos}.
\end{example}

To conclude this part, we provide the complexity of the described procedure.

\begin{theorem}\label{thimperfect}
	Algorithm \ref{alg:submodels} terminates in $EXPTIME$.  
\end{theorem}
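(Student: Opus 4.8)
The plan is to bound the total running time of Algorithm~\ref{alg:submodels} by viewing its execution as the traversal of a binary recursion tree whose nodes are the sets $\bar{S}$ extracted from $substates$, and then charging a polynomial amount of work to each node. The two ingredients are therefore a bound on the number of loop iterations (the number of nodes in the tree) and a bound on the cost of a single iteration.

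First I would establish termination together with a size bound on the recursion tree. Observe that whenever the test on line~7 succeeds for a set $\bar{S}$, the two sets $S_1 = \bar{S} \setminus \{s\}$ and $S_2 = \bar{S} \setminus \{s'\}$ that are pushed onto $substates$ each contain strictly fewer states than $\bar{S}$, since $s \neq s'$. Hence along any chain of successive splits the cardinality of the processed set strictly decreases, and such a chain must stop at the latest when the set has become a singleton, for which no pair $s \neq s'$ exists and line~7 necessarily fails. Consequently the recursion tree rooted at the initial set $S$ has depth at most $|S|$ and branching factor at most $2$, so it contains at most $O(2^{|S|})$ nodes; in particular the while loop on lines~5--15 terminates after at most $O(2^{|S|})$ iterations.

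Next I would bound the cost of a single iteration. Extracting $\bar{S}$ and evaluating the test on line~7 amounts to checking, for every agent $i \in Ag(\varphi)$ and every pair of distinct states $s, s' \in \bar{S}$, whether $s \sim_i s'$; this is polynomial in $|S|$ and $|\Ag|$, hence polynomial in the size of $M$. In the else-branch, the calls to $GenerateNegative$ and $GeneratePositive$ build the sub-models of Definitions~\ref{subneg} and~\ref{subpos}, which merely copy the restricted states, redirect the transitions leaving $S^\star$ to the fresh sink $s_\bot$ (respectively $s_\top$), and adjust the protocol and labelling functions; each such operation is again polynomial in $|M|$. Thus every iteration of the loop runs in $poly(|M|)$ time, and multiplying the number of iterations by the per-iteration cost yields a total running time of $O(2^{|S|}) \cdot poly(|M|)$, which is single-exponential in the size of $M$ and therefore places the procedure in $EXPTIME$.

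The main obstacle I anticipate lies in the bound on the number of loop iterations rather than in the per-iteration analysis: because the algorithm does not memoise the subsets it has already examined, the same subset of states may be regenerated and reprocessed along several distinct branches. The argument above deliberately sidesteps this by bounding the depth ($\leq |S|$) and the branching ($\leq 2$) of the recursion tree, which caps the total number of \emph{nodes} at $O(2^{|S|})$ regardless of any repetition. The point to be careful about is precisely that this is a bound on tree nodes and not on distinct subsets, so that the repeated work across branches is already subsumed by the $2^{|S|}$ factor and no sharper accounting is needed to land in $EXPTIME$.
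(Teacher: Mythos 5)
Your proof is correct and takes essentially the same approach as the paper's: polynomially bounded work per iteration (the indistinguishability test plus the polynomial-time calls to $GenerateNegative$ and $GeneratePositive$) multiplied by an exponential bound on the number of loop iterations. The only difference is that you actually justify the exponential bound via the binary recursion tree (branching factor $2$, depth at most $|S|$ since each split strictly shrinks the set), whereas the paper simply asserts that the loop takes exponentially many steps in the worst case, so your write-up is, if anything, the more rigorous of the two.
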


\begin{proof}
	The procedures $GenerateNegative()$ and $GeneratePositive()$ to produce models $M_n$ and $M_p$ are polynomial in the size of the original model $M$.
	In the worst case the loop to generate the list $candidates$ (lines 5-15) terminates in an exponential number of steps. Note that, the worst case happens when there is full indistinguishability, i.e. $\forall s, s' \in S$: $s \sim_i s'$. For the above reasoning, the result follows.
\end{proof}

\subsection{Phase 2: check sub-formulas}

Given a candidate, we present here how to check the sub-formulas on it.
The procedure is presented in Algorithm~\ref{alg:check}. 
Informally, the algorithm takes a candidate and an ATL$\!^*$ formula $\varphi$. 
It works with the set $result$ that contains tuples $\langle s, \psi, vatom_\psi\rangle$, where $s$ is a state of $S_p \cap S_n$\footnote{Note that as they are defined, $S_p$ and $S_n$ differ only for states $s_\top$ and $s_\bot$. So, $S_p \cap S_n = S_p \setminus \{s_\top\} = S_n \setminus \{s_\bot\}$.}, $\psi$ is a sub-formula of $\varphi$, and $vatom_\psi$ is the atomic proposition generated from $\psi$, where $v \in \{p,n\}$ is the type of the sub-model (\textit{i.e.}, positive or negative).
In line 1 the list $result$ is initialized to $\emptyset$. In line 2 the list $subformulas$ is initialized via the subroutine $SubFormulas()$. By $SubFormulas(\varphi)$ we take all the sub-formulas of $\varphi$ with only one strategic operator.
With more detail, we can see our formula $\varphi$ as a tree, where the root is $\varphi$, the other nodes are the sub-formulas of $\varphi$, and the leafs of the tree are the sub-formulas of $\varphi$ having only one strategic operator. Given this tree, the algorithm generates for each node $\psi$ an atomic proposition $atom_\psi$ and updates the tree by replacing every occurrence of $\psi$ in its ancestors with $atom_\psi$. Then, the nodes of the tree are stored in the list $subformulas$ by using a depth first search.
For instance, let us consider the formula $\varphi = \naww{\Gamma_1} X \all{\Gamma_2} p U r \land (\naww{\Gamma_3} X p \vee \all{\Gamma_4} q R p)$; when we apply $SubFormulas(\varphi)$ the tree in Figure \ref{fig:tree} is generated and we obtain the ordered list  $subformulas = \{\psi_2, \psi_1, \psi_3, \psi_4\}$. 
The extraction of the set of sub-formulas from $\varphi$ is necessary to verify whether there are sub-models satisfying/unsatisfying at least a part of $\varphi$ (not exclusively the whole $\varphi$). If that is the case, such sub-models can later on be used to verify $\varphi$ over the entire model $M$ through CTL$^*$ model checking.
The main loop (lines 3-12) works until all the sub-formulas of $\varphi$ are treated.
By starting from the first formula of the list, the loop in lines 5-11 proceeds for each state, and checks the current sub-formula against the currently selected sub-models $M_n$ and $M_p$.
Note that, in lines 6 and 9, we have $IR$ as verification mode\footnote{As usual in the verification process, we denote imperfect recall with $r$, perfect recall with $R$, imperfect information with $i$, and perfect information with $I$.}. Thus, the models are checked under the assumptions of perfect information and perfect recall. If the sub-formula is satisfied over the model $M_n$ (line 6), then by calling function $UpdateModel(M_n, s, atom_{\psi})$, the procedure updates the model $M_n$ by updating the set of atomic propositions as $AP = AP \cup atom_{\psi}$ and the labelling function of $s$ as $V(s) = V(s) \cup atom_{\psi}$ (line 7). Furthermore, in line 8, the list $result$ is updated by adding a new tuple.
The reasoning applied to the model $M_n$ is also applied to $M_p$ (lines 9-11).
At the end of the procedure, the state $s_\top$ is updated (line 5) since, as described in Definition \ref{subpos}, it needs to contain all the possible atoms that are in $M$.

\begin{figure}
	\begin{center}
		\large
		\mbox{\scalebox{0.38}[0.38]{
				\begin{tikzpicture}
					\node []
					(S0)
					{\huge $\stackrel{atom_{\psi_1} \land (atom_{\psi_3} \vee atom_{\psi_4})}{}$};
					\node [node distance = 6em]
					(S3)
					[below of = S0]
					{\huge $\stackrel{}{}$};
					\node [node distance = 17em]
					(S1)
					[left of = S3]
					{\huge $\stackrel{\psi_1 = \naww{\Gamma_1} X atom_{\psi_2}}{}$};
					\node [node distance = 14em]
					(S2)
					[right  of = S3]
					{\huge $\stackrel{atom_{\psi_3} \vee atom_{\psi_4}}{}$};
					\node [node distance = 5em]
					(S4)
					[below of = S1]
					{\huge $\stackrel{\psi_2 = \all{\Gamma_2} p U r}{}$};
					\node [node distance = 5em]
					(S7)
					[below of = S2]
					{};
					\node [node distance = 4em]
					(S8)
					[left of = S7]
					{\huge $\stackrel{\psi_3 = \naww{\Gamma_3} X p}{}$};
					\node [node distance = 4em]
					(S9)
					[right  of = S7]
					{\huge $\stackrel{\psi_4 = \all{\Gamma_4} q R p}{}$};
					
					\path[]
					(S0)  
					edge  (S1)
					edge  (S2)
					(S1)  
					edge  (S4)
					(S2)  
					edge  (S8)
					edge  (S9)
					;
				\end{tikzpicture}
			}
		}
		\caption{\label{fig:tree} The tree for the formula $\varphi_v$.
		}
	\end{center}
\end{figure}
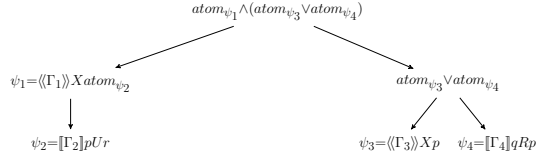

\begin{algorithm}
	\footnotesize
	\caption{$CheckSub$-$\!formulas$ ($\langle M_n, M_p \rangle$, $\varphi$)}
	\label{alg:check}
	\begin{algorithmic}[1] 
		\State $result = \emptyset$;
		\State $subformulas = SubFormulas(\varphi)$;
		\While {$subformulas \neq \emptyset$}
		\State extract $\psi$ from $subformulas$;
		\For{$s \in S_p \cap S_n$}
		\If {$M_n, s \models_{IR} \psi$}
		\State $M_n \!=\! U\!pdateM\!odel(M_n, s, atom_{\psi})$;
		\State $result = \langle s, \psi, natom_{\psi}\rangle \cup result$;
		\EndIf
		\If {$M_p, s \models_{IR} \psi$}
		\State $M_p \!=\! U\!pdateM\!odel(M_p, s, atom_{\psi})$;
		\State $result = \langle s, \psi, patom_{\psi}\rangle \cup result$;
		\EndIf
		\EndFor
		\State $M_p \!=\! U\!pdateM\!odel(M_p, s_\top, atom_{\psi})$;
		\EndWhile
		\State \Return $result$;
	\end{algorithmic}
\end{algorithm}

\begin{example}	
Given the negative sub-model $M_n$ as in Figure~\ref{fig:ex-perfect-info} and its positive counterpart $M_p$ as input of  Algorithm~\ref{alg:check} we can analyze formulas $\varphi_1 = \naww{rover} F ((oc \land rm) \land  F ((pl \vee pr) \land F (oc \land rm)))$, $\varphi_2 = \naww{rover, mechanic} F ((oc \land rm) \land  \naww{rover} F ((pl \vee pr) \land F (oc \land rm)))$, and $\varphi'_3 = \naww{rover, mechanic} F (rp \land nip \land \naww{rover} F ( (pl \vee pr) \land F (oc \land rm)))$\footnote{$\varphi_3$ described in Section \ref{example:rover} is substituted with $\varphi'_3$ where $\neg ip$ is replaced with $nip$ as described in the preprocessing algorithm of Section \ref{sec:prepro}.}.
We start with $\varphi_1$ where the set of sub-formulas is initialized to the singleton $\varphi_1 = \naww{rover} F ((oc \land rm) \land  F ((pl \vee pr) \land F (oc \land rm)))$ since there is only one strategic operator in $\varphi_1$.
By line 4, the algorithm extracts $\varphi_1$ and by the loop in lines 5-11 it verifies that the formula is true in the negative and positive sub-models in states $s_4$, $s_5$, $s_6$, $s_7$, and $s_8$. By line 7 (resp., 10) it updates the model $M_n$ (resp., $M_p$) by adding $atom_{\psi_1}$ in $AP$ and by updating the labeling function as $L'(s_i) = L(s_i) \cup \{atom_{\psi_1}\}$, for all $i \in \{4,5,6,7,8\}$. Since there is only one strategic operator, the procedure is concluded.
For $\varphi_2$, the set $subformulas$ via $SubFormulas()$ is initialized to the list $\{\psi_1 = \naww{rover} F ((pl \vee pr) \land F (oc \land rm))$, $\psi_2 = \naww{rover, mechanic} F ((oc \land rm) \land atom_{\psi_1})\}$.
By line 4, the algorithm extracts $\psi_1$ and by the loop in lines 5-11 it verifies that the formula is true in the negative and positive sub-models in states $s_4$, $s_5$, $s_6$, $s_7$, and $s_8$. By line 7 (resp., 10) it updates the model $M_n$ (resp., $M_p$) by adding $atom_{\psi_1}$ in $AP$ and by updating the labeling function as $L'(s_i) = L(s_i) \cup \{atom_{\psi_1}\}$, for all $i \in \{4,5,6,7,8\}$.
For the second iteration of the main loop (lines 3-11) the formula $\psi_2$ is analyzed.
In this case, $\psi_2$ is true in the negative and positive sub-models in states $s_I$, $s_2$, $s_4$, $s_5$, $s_6$, $s_7$, and $s_8$. Again, the models $M_n$ and $M_p$ are updated by adding $atom_{\psi_2}$ in $AP$ and by updating the labeling function as $L'(s_i) = L(s_i) \cup \{atom_{\psi_1}\}$, for all $i \in \{I,2,4,5,6,7,8\}$.
For $\varphi_3$, the set of sub-formulas is initialized to the list $\{\psi_1 = \naww{rover} F ( (pl \vee pr) \land F (oc \land rm))$, $\psi_2 = \naww{rover, mechanic} F (rp \land nip \land atom_{\psi_1})\}$.
Since $\psi_1$ is the same as before for $\varphi_2$ we have the same behavior of the algorithm.
For the second iteration, $\psi_2$ is false in all the states in $S_p \cap S_n$. So, no update are done in the list $result$.
\end{example}

To conclude this part, we provide the complexity result.

\begin{theorem}\label{th:check}
	Algorithm \ref{alg:check} terminates in 2$EXPTIME$.  
\end{theorem}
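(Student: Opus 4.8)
The plan is to isolate the single dominant operation in Algorithm~\ref{alg:check}, namely the two model-checking calls $M_n, s \models_{IR} \psi$ and $M_p, s \models_{IR} \psi$ on lines~6 and~9, and to show that all remaining work contributes only a polynomial overhead. Termination itself is immediate, since the outer loop (lines~3--12) removes one element of the finite list $subformulas$ at each iteration and the inner loop (lines~5--11) ranges over the finite set $S_p \cap S_n$; the whole argument therefore reduces to bounding the running time.

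First I would fix the cost of a single call. By the footnote convention the verification mode $IR$ denotes perfect information together with perfect recall, so each call is an instance of ATL$^*$ model checking under perfect information, which by the classical result of~\cite{AHK02} is $2EXPTIME$-complete: its running time is doubly exponential in the size of the checked formula and polynomial in the size of the model. Here the checked formula is a subformula $\psi$ returned by $SubFormulas(\varphi)$, carrying a single strategic operator and satisfying $|\psi| \le |\varphi|$ after the atom replacement, while the model is one of the candidate sub-models, whose size is polynomial in $|M|$ by the observation already used in the proof of Theorem~\ref{thimperfect}.

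Next I would count the calls and check that the inputs stay small. The list $subformulas$ has length at most linear in $|\varphi|$, and for each of its elements the inner loop performs at most $2\,|S_p \cap S_n| \le 2\,|S|$ model-checking calls; hence the total number of calls is $O(|\varphi|\cdot|S|)$, i.e.\ polynomial. The auxiliary routines $SubFormulas(\cdot)$ and $UpdateModel(\cdot)$ run in polynomial time. One point to verify is that the repeated $UpdateModel$ calls do not blow up the models: each processed subformula adds at most one fresh atom to $AP$ and to the labelling of each state, so over the whole run the model grows by at most $O(|\varphi|)$ atoms and its size remains polynomial in $|M|$ throughout.

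The hard part is only to argue that composing polynomially many $2EXPTIME$ operations stays within $2EXPTIME$. Since every call runs in time doubly exponential in $|\psi|\le|\varphi|$ and polynomial in a model of size $\mathrm{poly}(|M|)$, and the surrounding loops multiply this by only a polynomial factor, the cumulative running time remains doubly exponential in $|\varphi|$ and polynomial in $|M|$. Consequently Algorithm~\ref{alg:check} terminates in $2EXPTIME$, as claimed.
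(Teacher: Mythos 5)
Your proof is correct and follows essentially the same route as the paper's: isolate the ATL$^*$ model-checking calls at lines~6 and~9 as the dominant $2EXPTIME$ cost, note that $SubFormulas()$ and $UpdateModel()$ are polynomial, and bound the number of calls by $|\varphi|\cdot|S|$. Your additional checks (that the subformulas stay no larger than $\varphi$ and that repeated $UpdateModel$ calls only grow the model by $O(|\varphi|)$ atoms) are sensible refinements the paper leaves implicit, but they do not change the argument.
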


\begin{proof} 
	The procedure $SubFormulas()$ is polynomial in the size of the formula under exam. The procedure $UpdateModel()$ is polynomial in the size of the model. ATL$^*$ model checking, called in lines 6 and 9, is 2$EXPTIME$ \cite{AlurHenzingerKupferman97}.
	Furthermore, the number of iterations of the inner part (lines 5-11) is $|\varphi|\cdot|S|$, i.e. polynomial in the size of the model and of the formula.
	Then, the total complexity derives from ATL$^*$ model checking and is 2$EXPTIME$.
\end{proof}

\subsection{Phase 3: CTL$^*$ verification}\label{subrun}

In the previous sections, we showed how to extract sub-models of a model $M$ satisfying/unsatisfying at least one sub-formula $\varphi'$ of $\varphi$. Here, we present in Algorithm~\ref{alg:CTL} how to apply CTL$\!^*$ model checking to try to conclude the satisfaction/violation of $\varphi$ in $M$ by using such sub-models. 
%
%
The algorithm takes in input an iCGS $M$, an ATL$\!^*$ formula $\varphi$, and a set $result$. 
%
The algorithm starts by updating the model $M$ with the atoms generated from Algorithm \ref{alg:check} and stored in $result$ (lines 2-4). 
%
In lines 5-11, the procedure generates formulas $\psi_p$ and $\psi_n$ in accordance with the tuples in $result$. 
%
In this part of the procedure, the algorithm uses the subroutine $UpdateFormula(\varphi,\psi,atom_{\psi})$. The latter modifies the formula $\varphi$ by replacing each occurrence of $\psi$ with $atom_{\psi}$, where $atom_{\psi}$ is a unique atom identifier for formula $\psi$.  As an example, let us consider $\varphi = \naww{\Gamma} X \naww{\Gamma} p U q$, with $\psi = \naww{\Gamma} p U q$. When we apply $UpdateFormula(\varphi,\psi,atom_{\psi})$, we obtain an updated version of $\varphi$, where all occurrences of $\psi$ have been replaced by $atom_{\psi}$, \textit{i.e.} $\varphi$ becomes $\naww{\Gamma} X atom_{\psi}$.
The formulas $\psi_n$ and $\psi_p$ represent the ATL$\!^*$ formula which remains to verify in $M$. To achieve this, we transform $\psi_n$ and $\psi_p$ into their CTL$\!^*$ counterparts $\varphi_A$ and $\varphi_E$, respectively (lines 12-13).
Given a formula $\varphi$ in ATL$\!^*$, we denote with $\varphi_A$ (resp., $\varphi_E$) the universal formula (resp., existential) of $\varphi$, \textit{i.e.} we substitute each occurrence of a strategic operator in $\varphi$ with the universal (resp., existential) operator of CTL$\!^*$. Given $\varphi_A$ and $\varphi_E$ we can perform standard CTL$\!^*$ model checking.
First, by verifying if $\varphi_A$ is satisfied by $M$ (line 14). If this is the case 
we can derive that $M\models\varphi$ and set $k = \top$. Otherwise, the algorithm continues verifying if $\varphi_E$ is not satisfied by $M$ (line 16). If this is the case 
we can derive that $M\!\not\models\!\varphi$ and set $k \!=\! \bot$.  
If neither of the two previous conditions are satisfied then the inconclusive result (\textit{i.e.}, $?$) is returned.

\begin{algorithm}
	\footnotesize
	\caption{$Verification$ ($M$, $\varphi$, $result$)}
	\label{alg:CTL}
	\begin{algorithmic}[1] 
		\State $k$ = $?$;
		\For{$s \in S$}
		\State take set $atoms$ from $result(s)$;
		\State $UpdateModel$($M$, $s$, $atoms$);	
		\EndFor
		\State $\psi_n = \varphi$, $\psi_p = \varphi$; 
		\While{$result$ is not empty}
		\State extract $\langle s, \psi, vatom_{\psi}\rangle$ from $result$; 
		\If{$v = n$}
		\State $\psi_n$ = $UpdateFormula$($\psi_n$, $\psi$, $natom_{\psi}$);
		\Else
		\State $\psi_p$ = $UpdateFormula$($\psi_p$, $\psi$, $patom_{\psi}$);
		\EndIf
		\EndWhile
		\State $\varphi_A$ = $FromATLtoCTL$($\psi_n$, $n$);
		\State $\varphi_E$ = $FromATLtoCTL$($\psi_p$, $p$);
		\If {$M\models\varphi_A$}
		\State $k$ = $\top$;
		\EndIf
		\If {$M\not\models\varphi_E$}
		\State $k$ = $\bot$;
		\EndIf
		\State \Return $k$;
	\end{algorithmic}
\end{algorithm} 

\begin{example}
As for the previous example, we continue to analyze formulas $\varphi_1$, $\varphi_2$, and $\varphi_3$.
For $\varphi_1$ the list $result$ is composed by the tuples: $\langle s_4, \psi_1, vatom_{\psi_1}\rangle, \langle s_5, \psi_1, \allowbreak vatom_{\psi_1}\rangle, \langle s_6, \psi_1, vatom_{\psi_1}\rangle, \langle s_7, \psi_1, vatom_{\psi_1}\rangle, \textit{ and } \langle s_8, \psi_1, vatom_{\psi_1}\rangle$ where $v \in \{n,p\}$. In lines 2-4, the model $M$ depicted in Figure \ref{fig:curiosity} is updated with the atoms $natom_{\psi_1}$ and $patom_{\psi_1}$ in states $s_4$, $s_5$, $s_6$, $s_7$, and $s_8$. Then, in lines 6-11, $\psi_n$ and $\psi_p$ are generated as $natom_{\psi_1}$ and $patom_{\psi_1}$, respectively. Since there are no strategic operators in the latter formulas then $\varphi_A = \psi_n$ and $\varphi_E = \psi_p$. The condition in line 14 is not satisfied but it is in line 16. So, the output of the procedure is $\bot$, i.e. the formula is false in the model $M$. 
For $\varphi_2$ the list $result$ is composed by the tuples: $\langle s_4, \psi_1, vatom_{\psi_1}\rangle, \langle s_5, \psi_1, vatom_{\psi_1}\rangle, \langle s_6, \psi_1, vatom_{\psi_1}\rangle, \allowbreak \langle s_7, \psi_1, vatom_{\psi_1}\rangle,  \langle s_8, \psi_1, \allowbreak vatom_{\psi_1}\rangle, \langle s_I, \psi_2,  vatom_{\psi_2}\rangle, \langle s_2, \psi_2, vatom_{\psi_2}\rangle, \langle s_4, \psi_2, vatom_{\psi_2}\rangle, \langle s_5, \psi_2, vatom_{\psi_2} \allowbreak\rangle, \langle s_6, \psi_2, vatom_{\psi_2}\rangle, \langle s_7, \psi_2, vatom_{\psi_2}\rangle,  \textit{ and } \langle s_8, \psi_2, vatom_{\psi_2}\rangle$ where $v \in \{n,p\}$. In lines 2-4, the model $M$ depicted in Figure \ref{fig:curiosity} is updated with the atoms $natom_{\psi_1}$ and $patom_{\psi_1}$ in states $s_4$, $s_5$, $s_6$, $s_7$, and $s_8$; and atoms $natom_{\psi_2}$ and $patom_{\psi_2}$ in states $s_I$, $s_2$, $s_4$, $s_5$, $s_6$, $s_7$, and $s_8$. Then, in lines 6-11 $\psi_n$ and $\psi_p$ are generated as $natom_{\psi_2}$ and $patom_{\psi_2}$, respectively. Since there are no strategic operators in the latter formulas then $\varphi_A = \psi_n$ and $\varphi_E = \psi_p$. The condition in line 16 is not satisfied but it is in line 14. So, the output of the procedure is $\top$, i.e. the formula is true in the model $M$. 
For $\varphi_3$ the list result is composed by the tuples: $\langle s_4, \psi_1, vatom_{\psi_1}\rangle$, $\langle s_5, \psi_1, vatom_{\psi_1}\rangle$, $\langle s_6, \psi_1, vatom_{\psi_1}\rangle$, $\langle s_7, \psi_1, vatom_{\psi_1}\rangle$,  and $\langle s_8, \psi_1, vatom_{\psi_1}\rangle$ where $v \in \{n,p\}$. In lines 2-4, the model $M$ depicted in Figure \ref{fig:curiosity} is updated with the atoms $natom_{\psi_1}$ and $patom_{\psi_1}$ in states $s_4$, $s_5$, $s_6$, $s_7$, and $s_8$. Then, in lines 6-11 $\psi_n$ and $\psi_p$ are generated as $\naww{rover, mechanic} F (rp \land nip \land natom_{\psi_1})$ and $\naww{rover, mechanic} F (rp \land nip \land patom_{\psi_1})$, respectively. Since there are strategic operators in the latter formulas then by lines 12-13 the algorithm replaces the strategic operators with path quantifiers. In particular, the algorithm produces formulas $\varphi_A = A F (rp \land nip \land natom_{\psi_1})$ and $\varphi_E = E F (rp \land nip \land patom_{\psi_1})$. The condition in line 14 is not satisfied but it is in line 16. So, the output of the procedure is $\bot$, i.e. the formula is false in the model $M$. 

Note that, in these examples, with our algorithm we can solve problems that are in the class of undecidable problems. Furthermore, we preserve the truth values, i.e. they are the same as described in Section \ref{example:rover} in the context of imperfect information and perfect recall strategies. What remains is the correctness of our algorithm that we discuss in Section \ref{soundness}.

\end{example}

\begin{theorem}\label{thCTL}
Algorithm~\ref{alg:CTL} terminates in $PSPACE$.
\end{theorem}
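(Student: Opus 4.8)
The plan is to trace Algorithm~\ref{alg:CTL} line by line and show that every step outside the two CTL$^*$ model-checking calls runs in polynomial space (indeed polynomial time), so that the overall cost is dominated by CTL$^*$ model checking, which is known to be PSPACE-complete \cite{BaierKatoen08}.

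First I would account for the preprocessing block. The loop in lines 2--4 iterates once per state $s \in S$ and calls $UpdateModel$, which (as already observed in the proof of Theorem~\ref{th:check}) is polynomial in the size of the model; hence this block is polynomial. The loop in lines 5--11 extracts tuples from $result$ and applies $UpdateFormula$, each application replacing occurrences of a sub-formula $\psi$ by a fresh atom. Since $|result|$ is bounded by $|\varphi|\cdot|S|$ (the iteration count established in the proof of Theorem~\ref{th:check}) and each substitution is polynomial in $|\varphi|$, the whole block is polynomial. Likewise, $FromATLtoCTL$ in lines 12--13 merely replaces each strategic operator by the corresponding path quantifier, a syntactic pass polynomial in $|\varphi|$. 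Thus, at the entry of line 14 we have constructed, in polynomial time, a model whose size is polynomial in $|M|$ and two CTL$^*$ formulas $\varphi_A, \varphi_E$ whose size is polynomial in $|\varphi|$.

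Next I would invoke the complexity of CTL$^*$ model checking. The tests $M\models\varphi_A$ (line 14) and $M\not\models\varphi_E$ (line 16) are two instances of CTL$^*$ model checking over the polynomially enlarged model $M$ and the formulas $\varphi_A,\varphi_E$. CTL$^*$ model checking is PSPACE-complete \cite{BaierKatoen08}, and PSPACE is closed under complementation, so deciding $M\not\models\varphi_E$ is still in PSPACE. Because the two calls are performed sequentially, the working space can be reused between them, and the polynomial preprocessing of the earlier lines contributes only polynomially many additional cells; therefore the total space stays within PSPACE, and the result follows.

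The one point that needs care---rather than a genuine obstacle---is to confirm that enlarging $M$ and rewriting $\varphi$ do not blow up the inputs to the model checker super-polynomially. This reduces to the bound $|result| \le |\varphi|\cdot|S|$ together with the fact that each fresh atom $atom_\psi$ adds at most one proposition per relevant state, so both the model and the formulas remain polynomial. Once these polynomial bounds are in place, the PSPACE-completeness of CTL$^*$ model checking immediately yields the claimed complexity.
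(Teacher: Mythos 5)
Your proof is correct and follows essentially the same route as the paper's: bound every step outside the model-checking calls (lines 2--4, 5--11, 12--13) by a polynomial in $|M|$ and $|\varphi|$, then let the two CTL$^*$ model-checking calls in lines 14 and 16, which are in PSPACE, dominate the overall cost. The extra details you supply---the bound $|result|\le|\varphi|\cdot|S|$, closure of PSPACE under complementation for line 16, and space reuse between the sequential calls---are refinements the paper leaves implicit, not a different argument.
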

\begin{proof}
The cost of the loop in lines 2-4 is polynomial in the size of the model. Since the complexity of $UpdateFormula()$ is polynomial in the size of the formula, then the cost of the loop in lines 6-11 is polynomial in the size of the formula and in the size of the model. 
The procedure $FromATLtoCTL()$ is polynomial in the size of the formula. The CTL$^*$ model checking problem is $PSPACE$ (lines 14 and 16). Then, the total complexity is $PSPACE$.
\end{proof} 

\subsection{The overall model checking procedure}\label{wholeprocedure}
Given Algorithms \ref{alg:prepro}, \ref{alg:submodels}, \ref{alg:check}, and \ref{alg:CTL}, we can provide the overall procedure as described in Algorithm \ref{alg:full}.

\begin{algorithm}
	\footnotesize
	\caption{$ModelCheckingProcedure$ ($M$, $\varphi$)}
	\label{alg:full}
	\begin{algorithmic}[1] 
		\State $Preprocessing(M, \varphi)$;
		\State $candidates$ = $FindSub$-$models$$(M, \varphi)$;
		\While {$candidates$ is not empty}
		\State extract $\langle M_n, M_p \rangle$ from $candidates$;
		\State $result$ = $CheckSub$-$\!formulas$$(\langle M_n, M_p \rangle,\varphi)$;
		\State $k$ = $Verification$$(M, \varphi, result)$; 
		\If {$k \neq ?$}
		\State \Return $k$;
		\EndIf
		\EndWhile
		\State \Return $?$;
	\end{algorithmic}
\end{algorithm}

The $ModelCheckingProcedure()$ takes in input a model $M$ and a formula $\varphi$, and calls the function $Preprocessing()$ to generate the NNF of $\varphi$ and to replace all negated atoms with new positive atoms inside $M$ and $\varphi$. After that, it calls the function $FindSub$-$models()$ to generate all the positive and negative sub-models that represent all the possible sub-models with perfect information. Then, there is a while loop (lines 3-8) that for each candidate checks the sub-formulas true on the sub-models via $CheckSub$-$formulas()$ and the truth value of the whole formula via $Verification()$. If the output of the latter procedure is different from the unknown then the result is directly returned (line 8).

\begin{theorem}\label{theo:full}
	Algorithm~\ref{alg:full} terminates in $2EXPTIME$.
\end{theorem}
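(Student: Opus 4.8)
The plan is to decompose the running time of Algorithm~\ref{alg:full} according to its four constituent subroutines and to argue that the dominant cost is the one incurred by the model-checking calls inside $CheckSub$-$formulas()$. First I would dispatch the easy parts: the call to $Preprocessing()$ (Algorithm~\ref{alg:prepro}) visits every state once per negated atom, so it is polynomial in the sizes of $M$ and $\varphi$; and the call to $FindSub$-$models()$ (Algorithm~\ref{alg:submodels}) runs in $EXPTIME$ by Theorem~\ref{thimperfect}, producing a list $candidates$ whose length is at most exponential in $|S|$ (the worst case being full indistinguishability, exactly as in the proof of Theorem~\ref{thimperfect}).

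Next I would analyse the main loop (lines 3--8). In the worst case it does not return early, so it performs one iteration per element of $candidates$, hence at most an exponential number of iterations. Each iteration makes one call to $CheckSub$-$formulas()$, which terminates in $2EXPTIME$ by Theorem~\ref{th:check}, followed by one call to $Verification()$, which terminates in $PSPACE$ by Theorem~\ref{thCTL}. Since $PSPACE$ is absorbed by $2EXPTIME$, the cost of a single iteration is $2EXPTIME$, i.e. bounded by $2^{2^{poly(|M|,|\varphi|)}}$.

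The only point requiring care, and the step I expect to be the crux, is the interaction between the exponentially many loop iterations and the double-exponential cost of each one. Multiplying an exponential factor $2^{poly}$ by a double-exponential factor $2^{2^{poly}}$ still yields a double-exponential bound, because $2^{poly}\cdot 2^{2^{poly}} = 2^{poly + 2^{poly}} \le 2^{2^{poly'}}$ for a slightly larger polynomial $poly'$: the additive polynomial term in the exponent is swallowed by the double exponential. Consequently the loop, and therefore the whole of Algorithm~\ref{alg:full}, stays within $2EXPTIME$, and the result follows. I would close by remarking that this matches the complexity of ATL$\!^*$ model checking under perfect information and perfect recall, which is the bound promised in the introduction.
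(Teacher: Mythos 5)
Your proof is correct and reaches the stated bound, but it takes a genuinely different route through the loop analysis than the paper does. The paper's own proof also decomposes the cost by subroutine and invokes Theorems~\ref{thimperfect}, \ref{th:check} and \ref{thCTL}, but it disposes of the while loop in lines 3--8 by asserting that the list $candidates$ has size equal to $|S|$, i.e.\ polynomially many iterations, so that the total cost is simply the maximum of the subroutine costs. You instead bound $candidates$ only by an exponential (which is all that Theorem~\ref{thimperfect} actually guarantees, since an $EXPTIME$ loop can emit at most exponentially many entries) and then supply the absorption step the paper never needs to state: $2^{poly}\cdot 2^{2^{poly}} = 2^{poly+2^{poly}} \le 2^{2^{poly'}}$, so exponentially many $2EXPTIME$ iterations remain in $2EXPTIME$. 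This is more than a cosmetic difference, because the paper's cardinality claim is hard to justify in general: if the indistinguishability relation restricted to $Ag(\varphi)$ consists of disjoint pairs $s \sim_i s'$, then Algorithm~\ref{alg:submodels} produces one candidate for each way of deleting one state from each pair, which is exponential in $|S|$, not equal to $|S|$. Your more conservative estimate sidesteps that weakness entirely and still lands on the same $2EXPTIME$ bound, so nothing is lost; if anything, your version is the more robust of the two arguments.
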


\begin{proof}
	In the worst case the while loop in lines 3-8 needs to check all the candidates (i.e. the condition in line 7 is always false) and the size of the list of candidates is equal to the size of the set of states of $M$ (i.e. polynomial in the size of $M$). So, the total complexity is still determined by the subroutines and follows by Theorems \ref{thimperfect}, \ref{th:check}, and \ref{thCTL}. 
\end{proof}

\subsection{Soundness}\label{soundness}	

Before showing the soundness of our algorithm, we need to provide some auxiliary lemmas. Since in our procedure we use ATL$^*$ formulas in NNF, in the rest of the section we will consider this type of formulas.
We start with a preservation result from negative sub-models to the original model.

\begin{lemma}\label{lemma:negsub}
	Given a model $M$, a negative sub-model with perfect information $M_n$ of $M$, and a formula $\varphi$ of the form $\varphi = \naww{A} \psi$ (resp., $\all{A} \psi$) for some $A \subseteq Ag$. For any $s \in S_n \setminus \{s_\bot\}$, we have that:
	\begin{center}
		$M_n, s \models \varphi \Rightarrow M, s \models \varphi$
	\end{center}
\end{lemma}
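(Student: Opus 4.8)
The plan is to prove the $\naww{A}$ case in detail; the $\all{A}$ case is symmetric and I sketch it at the end. I would exploit two structural facts. First, by Definition~\ref{subneg} the sub-model $M_n$ keeps a subset $S^\star = S_n \setminus \{s_\bot\} \subseteq S$ on which $\delta$, $d$, $\sim_i$ and $V$ are inherited verbatim, and it redirects every transition leaving $S^\star$ into the absorbing sink $s_\bot$ with $V'(s_\bot)=\emptyset$; moreover, since $M_n$ has perfect information for the agents in $A \subseteq Ag(\varphi)$, the relation $\sim'_i$ is the identity on $S^\star$. Second, after the preprocessing of Section~\ref{sec:prepro} the formula is negation-free (each $\neg q$ having been absorbed into a fresh positive atom), so $\psi$ is \emph{monotone}: if one path dominates another pointwise in its sets of true atoms, it satisfies at least the same flat path formulas. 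I use this where $\psi$ is a path formula over atoms, which is precisely the situation in which the lemma is invoked, nested strategic subformulas having already been abstracted to atoms by $SubFormulas$.

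Starting from a uniform joint strategy $\Sigma_A$ in $M_n$ witnessing $M_n, s \models \naww{A}\psi$, I would build a uniform joint strategy $\Sigma'_A$ in $M$ by copying $\Sigma_A$ on every history that stays inside $S^\star$, i.e. $\sigma'_i(h) = \sigma_i(h)$ for $h \in (S^\star)^+$. Because $\sim_i$ is an equivalence relation and is the identity on $S^\star$, every history that has already left $S^\star$ is $\sim_i$-indistinguishable from \emph{at most one} history in $(S^\star)^+$; on such a history I let $\sigma'_i$ return the action prescribed for that unique witness, and on all remaining histories any action permitted by $d$ on the corresponding information set. This makes $\Sigma'_A$ well defined, uniform, and protocol-compliant, the last point because $d$ is constant on $\sim_i$-classes by Definition~\ref{def:cgs}.

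It then remains to show every $p \in out(s, \Sigma'_A)$ satisfies $\psi$. If $p$ never leaves $S^\star$, then since $\delta'=\delta$, $d'=d$ and $\sigma'_i=\sigma_i$ there, $p$ is literally a $\Sigma_A$-compatible path of $M_n$ with identical labels, so $M_n, p \models \psi$ gives $M, p \models \psi$. Otherwise let $k$ be the first index with $p_{k+1}\notin S^\star$; the $M_n$-path $q = p_1\cdots p_k\, s_\bot\, s_\bot\cdots$ is $\Sigma_A$-compatible (the exit step maps to $s_\bot$ under exactly the actions used, and $s_\bot$ absorbs every continuation), hence $M_n, q \models \psi$. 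Now $q$ and $p$ agree on labels up to position $k$, while from $k+1$ on $q$ carries $V'(s_\bot)=\emptyset \subseteq V(p_j)$; thus $p$ dominates $q$ pointwise, and monotonicity of the positive formula $\psi$ yields $M, p \models \psi$. Since all outcomes satisfy $\psi$, $\Sigma'_A$ witnesses $M, s \models \naww{A}\psi$.

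The hard part is exactly the uniformity of $\Sigma'_A$ in $M$: Definition~\ref{uniformitybouned} demands uniformity over \emph{all} histories, not merely the compatible ones, and a history living in $S^\star$ may be indistinguishable to some agent from histories that have already escaped $S^\star$, so the value of $\Sigma'_A$ inside and outside $S^\star$ cannot be chosen independently. Its resolution leans crucially on perfect information within $S^\star$ together with $\sim_i$ being an equivalence relation, which guarantee that the witnessing $(S^\star)^+$-history is unique and hence that no conflicting constraints arise. The symmetric $\all{A}$ case is easier: there one is handed an arbitrary $\Sigma'_A$ in $M$, projects it onto $M_n$, obtains from the hypothesis a single $\Sigma_A$-compatible $M_n$-path satisfying $\psi$, and lifts that path back to a $\Sigma'_A$-compatible $M$-path, where the same pointwise-domination/monotonicity step closes the argument.
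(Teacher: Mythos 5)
Your proof is correct and, at the level of outcome paths, it follows the same decomposition as the paper's own proof: paths that never leave $S^\star$ transfer verbatim between $M_n$ and $M$, and paths that exit are compared with their $M_n$-counterparts ending in $s_\bot^\omega$, where the empty labelling of $s_\bot$ plus positivity of the formula closes the case. The paper phrases this last step loosely as ``$\psi$ is satisfied in the prefix $p^s$''; your pointwise-domination-plus-monotonicity formulation is the precise rendering of that idea, and you rightly note it needs the formula to be negation-free (NNF alone is not enough --- a subformula $\neg q$ would break monotonicity; this is exactly what Phase~0 guarantees) and needs $\psi$ to be a flat path formula over atoms, which is how the lemma is actually invoked after $SubFormulas$ abstracts nested strategic operators. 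Where you genuinely go beyond the paper is the strategy side: the paper's proof never constructs a strategy in $M$ at all --- it only classifies paths, silently assuming the witnessing joint strategy of $M_n$ transfers --- whereas in $M$, which has imperfect information, the transferred strategy must be uniform with respect to $\sim_i$, and this is precisely the obstruction you isolate. Your resolution (copy $\Sigma_A$ on $(S^\star)^+$; use symmetry and transitivity of $\sim_i$ together with perfect information on $S^\star$ to show every history that has escaped $S^\star$ has at most one indistinguishable witness inside $(S^\star)^+$; assign actions per information set elsewhere) is sound, and the same care appears in your sketch of the $\all{A}$ direction, where the projection of an $M$-strategy to $M_n$ and the lifting of the witness path are spelled out rather than assumed. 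In short, your argument subsumes the published one: same path-level skeleton, but with the uniformity gap --- which the paper's proof simply does not address --- explicitly closed.
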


\begin{proof}
	By Definition \ref{satisfaction}, $M_n, s \models \varphi$ if and only if there exists a collective strategy $\sigma_A$ such that for all paths (resp., for all collective strategies $\sigma_A$ there exists a path) $p \in out(s,\sigma_A)$, $M_n, p \models \psi$. 
	By Definition \ref{subneg}, we know that states in $S_n$ are all but $s_\bot$ in $S$. So, we need to consider two classes of paths: 
	\begin{enumerate}
		\item $p$ where for all $i > 0$, $p_i \neq s_\bot$;
		\item $p$ where there exists an $i > 1$ such that $p_i = s_\bot$;
	\end{enumerate}
	For (1), we have that $p$ is also in $M$ and by consequence if $M_n, p \models \psi$ then $M, p \models \psi$.
	For (2), each path $p$ has the following structure $p = p^s \cdot p^\bot$, where $p^s$ is the prefix of $p$ where $p^s_i \in S$ for each $0 < i \leq |p^s|$, and $p^\bot$ is the suffix of $p$ where $p^\bot = s^\omega_\bot$. Now, 
	$M_n, p \models \psi$ if and only if $\psi$ is satisfied in the prefix $p^s$ but since $p^s$ is a prefix also in $M$ then $M, p^s \cdot p' \models \psi$, no matter what the states of the suffix $p'$ are.
\end{proof}

We also consider the preservation result from positive sub-models to the original model.

\begin{lemma}\label{lemma:possub}
	Given a model $M$, a positive sub-model with perfect information $M_p$ of $M$, and a formula $\varphi$ of the form $\varphi = \naww{A} \psi$ (resp., $\all{A} \psi$) for some $A \subseteq Ag$. For any $s \in S_p \setminus \{s_\top\}$, we have that:
	\begin{center}
		$M_p, s \not\models \varphi \Rightarrow M, s \not \models \varphi$ 
	\end{center} 
\end{lemma}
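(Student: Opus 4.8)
The plan is to prove the contrapositive, namely $M, s \models \varphi \Rightarrow M_p, s \models \varphi$, mirroring the structure of the proof of Lemma~\ref{lemma:negsub} but exploiting the labelling of the fresh state $s_\top$ rather than $s_\bot$. As in that proof, I would first unfold $M, s \models \varphi$ through Definition~\ref{satisfaction}: for $\varphi = \naww{A}\psi$ this gives a collective strategy $\sigma_A$ all of whose outcomes from $s$ satisfy $\psi$, and for $\varphi = \all{A}\psi$ it gives, for \emph{every} strategy, some outcome satisfying $\psi$. In either case I would split the paths involved into those that remain inside $S^\star$ and those that, by Definition~\ref{subpos}, leave $S^\star$ and are therefore redirected into the absorbing state $s_\top$, so that they take the shape $p = p^s \cdot s_\top^\omega$ with $p^s \in (S^\star)^+$.

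The conceptual core is a monotonicity observation dual to the one implicitly used for $s_\bot$: since the formula is in NNF and the preprocessing of Section~\ref{sec:prepro} has eliminated every negated atom, satisfaction of a path formula is preserved when the set of atoms true at each position is enlarged pointwise. Because $V'(s_\top) = AP$ by Definition~\ref{subpos}, the suffix $s_\top^\omega$ dominates, position by position, whatever continuation the matching path would have in $M$, while on the prefix $p^s$ the labellings of $M$ and $M_p$ coincide. Hence, whenever the corresponding $M$-path satisfies $\psi$, so does its $M_p$-counterpart $p^s \cdot s_\top^\omega$; and for paths that never leave $S^\star$ the two models are literally identical, so satisfaction transfers trivially.

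What differs between the two cases is the transfer of strategies. For $\naww{A}\psi$ I would take the witnessing $\sigma_A$ of $M$, restrict it to histories over $S^\star$, and pick any enabled action once $s_\top$ is reached; this is automatically a \emph{uniform} strategy of $M_p$, since on $S^\star$ the agents of $A \subseteq Ag(\varphi)$ have perfect information by construction of the sub-model and $s_\top$ lies in no nontrivial indistinguishability class. Every outcome of this strategy then falls into one of the two path classes above and satisfies $\psi$, giving $M_p, s \models \naww{A}\psi$.

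For $\all{A}\psi$ the quantification over strategies runs in the opposite direction, and this is where I expect the main obstacle. Given an arbitrary uniform strategy $\sigma'_A$ of $M_p$, I must produce a uniform strategy $\sigma_A$ of $M$ that agrees with $\sigma'_A$ on histories over $S^\star$, so that the existential outcome guaranteed by $M, s \models \all{A}\psi$ can be pulled back to $M_p$. The delicate point is uniformity in the \emph{full} model, where a state of $S^\star$ may be $\sim_i$-indistinguishable from states outside $S^\star$. I would argue such an extension always exists by noting that, as $\sim_i$ is an equivalence relation and the sub-model was generated so that $\sim_i$ is trivial on $S^\star$ for every $i \in Ag(\varphi)$, each $\sim_i$-class of $M$ meets $S^\star$ in at most one state; fixing $\sigma_A$ to equal $\sigma'_A$ on $S^\star$-histories therefore imposes no conflicting constraints and can be completed to a uniform strategy on all of $M$. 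The witnessing $M$-outcome then maps, via the two path classes and the monotonicity argument, to a $\sigma'_A$-compatible outcome of $M_p$ satisfying $\psi$, which establishes $M_p, s \models \all{A}\psi$ and completes the contrapositive.
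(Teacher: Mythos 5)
Your proposal is correct, and its semantic core coincides with the paper's own proof: both hinge on splitting the paths of $M_p$ into those confined to $S^\star$ and those absorbed into the suffix $s_\top^\omega$, and on the fact that, because the formula is in NNF with no negated atoms and $V'(s_\top) = AP$, satisfaction of the path formula is robust to how the prefix is continued. The difference is in direction and in rigor. The paper argues directly from $M_p, s \not\models \varphi$: it takes a falsifying $M_p$-path and, for the absorbed class, asserts that $\psi$ ``is not satisfied in the prefix \ldots no matter what the states of the suffix are'' --- which is exactly the contrapositive of the pointwise monotonicity observation you make explicit. You instead prove the contrapositive statement and, more importantly, you handle the strategy correspondence that the paper's proof silently skips: the restriction of a witnessing uniform strategy of $M$ to $M_p$ for $\naww{A}\psi$, and the uniform extension of an arbitrary strategy of $M_p$ to $M$ for $\all{A}\psi$, justified by the fact that under perfect information on $S^\star$ each $\sim_i$-class of $M$ meets $S^\star$ in at most one state, so no conflicting uniformity constraints arise. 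Since $out(s, \Sigma_A)$ is model-relative, some such transfer is genuinely needed for the quantification over strategies to carry across the two models, so your version closes a real gap in the published argument; what the paper's version buys in exchange is brevity. One caveat applies to both proofs equally: they treat $\psi$ as a purely temporal path formula with no nested strategic operators, which is justified by the bottom-up way the lemma is invoked in Algorithm~\ref{alg:check} but is not stated as a hypothesis of the lemma itself.
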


\begin{proof}
	By Definition \ref{satisfaction}, $M_p, s \not \models \varphi$ if and only for all collective strategies $\sigma_A$ there exists (resp., there exists a collective strategy $\sigma_A$ such that for all paths) $p \in out(s,\sigma_A)$, $M_p, p \not \models \psi$.
	By Definition \ref{subpos}, we know that states in $S_p$ are all but $s_\top$ in $S$. So, we need to consider two classes of paths: 
	\begin{enumerate}
		\item $p$ where for all $i > 0$, $p_i \neq s_\top$;
		\item $p$ where there exists an $i > 1$ such that $p_i = s_\top$;
	\end{enumerate}
	For (1), we have that $p$ is also in $M$ and by consequence if $M_p, p \not\models \psi$ then $M, p \not\models \psi$.
	For (2), each path $p$ as the following structure $p = p^s \cdot p^\top$, where $p^s$ is the prefix of $p$ where $p^s_i \in S$ for each $0 < i \leq |p^s|$, and $p^\top$ is the suffix of $p$ where $p^\top = s^\omega_\top$. Now, $M_p, p \not\models \psi$ if and only if $\psi$ is not satisfied in the prefix $p^s$ but since $p^s$ is also a prefix in $M$ then $M, p^s \cdot p' \not\models \psi$, no matter what the states of the suffix $p'$ are.
\end{proof}
	
Now, we present a preservation result from CTL$^*$ universal formulas to ATL$^*$ formulas.

\begin{lemma}\label{theo:UnivCTLimpliesATL}
	Given a model $M$, a formula $\varphi$ in ATL$\!^*$ written in NNF, and the CTL$^*$ universal version $\varphi_A$ of $\varphi$. 
	For any $s \in S$, we have that: 
	\begin{center}
		$M,s \models \varphi_A \Rightarrow M,s \models \varphi$
	\end{center} 
\end{lemma}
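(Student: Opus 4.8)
The plan is to prove the statement by a mutual structural induction on the state formula $\varphi$ and on the path formulas $\psi$ occurring in it, both assumed in NNF. For a path formula $\psi$ I let $\psi_A$ denote its universal version, obtained exactly as $\varphi_A$ is obtained from $\varphi$, i.e.\ by replacing every strategic operator it contains (whether $\naww{\Gamma}$ or $\all{\Gamma}$) with the CTL$^*$ path quantifier $A$. I then carry two induction hypotheses in parallel: at the state level, $M,s\models\varphi_A \Rightarrow M,s\models\varphi$ for every $s\in S$; at the path level, $M,p\models\psi_A \Rightarrow M,p\models\psi$ for every infinite path $p$.

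The base and the propositional/temporal cases are routine. For an atom $q$ and a negated atom $\neg q$ the transformation is the identity, so the implication is trivial. For $\land$, $\lor$ and for the temporal operators $X$, $U$, $R$, the universal transformation is applied componentwise, and the satisfaction clauses of Definition~\ref{satisfaction} are monotone in their immediate subformulas; hence each such case follows at once from the induction hypotheses applied to those subformulas (and, in the base case $\psi=\varphi$ of a path formula, from the state hypothesis evaluated at the first state $p_1$ of $p$).

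The strategic cases are the crux. Consider first $\varphi=\naww{\Gamma}\psi$, so that $\varphi_A = A\,\psi_A$. Assuming $M,s\models A\,\psi_A$, the CTL$^*$ semantics gives $M,p\models\psi_A$ for \emph{every} path $p$ issuing from $s$, and the path induction hypothesis upgrades this to $M,p\models\psi$ for every such $p$. Since an iCGS is non-blocking --- the protocol function returns a non-empty action set at every state and $\delta$ is total on enabled joint actions --- every coalition admits at least one joint strategy, and for any $\Sigma_\Gamma$ the set $out(s,\Sigma_\Gamma)$ is a non-empty subset of the set of all paths issuing from $s$. Fixing an arbitrary such $\Sigma_\Gamma$ therefore witnesses $\naww{\Gamma}\psi$, because all of its compatible paths already satisfy $\psi$. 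The case $\varphi=\all{\Gamma}\psi$ uses the same rewriting $\varphi_A=A\,\psi_A$ and the same conclusion that every path from $s$ satisfies $\psi$; here I must instead show that for \emph{every} strategy $\Sigma_\Gamma$ \emph{some} compatible path satisfies $\psi$, which holds precisely because $out(s,\Sigma_\Gamma)$ is non-empty and contained in the set of all paths from $s$.

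The only genuinely delicate point is that the single universal quantification over paths in $A\,\psi_A$ is strong enough to discharge both kinds of strategic operator simultaneously: it supplies a witnessing strategy for the existential operator $\naww{\Gamma}$ and, at the same time, a compatible witnessing path for \emph{each} strategy in the universal operator $\all{\Gamma}$. Making this precise rests on the two facts that $out(s,\Sigma_\Gamma)\neq\emptyset$ (the non-blocking property of iCGS) and that $out(s,\Sigma_\Gamma)$ is included in the set of all paths issuing from $s$; once these are recorded, the induction closes and the remaining cases are mechanical.
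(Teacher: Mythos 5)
Your proof is correct, and its core is the same observation the paper's proof rests on: for any coalition $\Gamma$ and any joint strategy $\Sigma_\Gamma$, the set $out(s,\Sigma_\Gamma)$ is a non-empty subset of the set of all paths from $s$, so a universal path quantifier is strong enough to discharge both $\naww{\Gamma}$ (fix any strategy; all its outcomes satisfy $\psi$) and $\all{\Gamma}$ (every strategy has at least one outcome, and it satisfies $\psi$). The difference is in the packaging. The paper treats only the single-operator cases, invoking the equivalences $A\psi \equiv \naww{\emptyset}\psi$ and $A\psi \equiv \all{\Ag}\psi$ from \cite{AHK02} followed by coalition monotonicity, and dismisses nesting with a one-line appeal to a ``classic bottom-up approach''; you instead argue the strategic cases directly from the semantics of Definition~\ref{satisfaction} and make the bottom-up step precise as a mutual structural induction on state and path formulas, carrying the universal version $\psi_A$ through the Boolean and temporal connectives by monotonicity --- which is exactly where the NNF hypothesis is needed, and which the paper never spells out. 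Your version is therefore more self-contained and rigorous, at the cost of length; the paper's buys brevity by reusing known equivalences. One small point you should patch: under the uniform-strategy semantics of Definition~\ref{uniformitybouned}, the existence of at least one joint strategy does not follow from non-emptiness of the protocol alone; you also need condition \myii of Definition~\ref{def:cgs} ($s \sim_i s'$ implies $d(i,s) = d(i,s')$), which guarantees that choosing one enabled action per indistinguishability class of histories yields a well-defined \emph{uniform} strategy. This is a one-sentence fix, not a gap in the argument's structure.
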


\begin{proof}
	First, consider the formula $\varphi = \naww{\Gamma} \psi$, in which $\Gamma \subseteq Ag$ and $\psi$ is a temporal formula without quantifications. By consequence, $\varphi_A = A \psi$. By the semantics of CTL$\!^*$, $M,s \models A \psi$ iff for all paths $p$ with $p_1 = s$, $M,p \models \psi$. As presented in~\cite{AHK02}, the formula $A \psi$ in CTL$\!^*$ is equivalent to $\naww{\emptyset} \psi$ in ATL$\!^*$. In fact, by Definition~\ref{satisfaction}, the set of paths that need to verify $\psi$ are in $out(s, \emptyset)$, \textit{i.e.} all the paths of $M$ from $s$. Then, we need to prove that: if $M,s \models \naww{\emptyset} \psi$ then $M,s \models \naww{\Gamma} \psi$. But, this follows directly by the semantics in Definition \ref{satisfaction}. 
	
	Now, consider the formula $\varphi = \all{\Gamma} \psi$, in which $\Gamma \subseteq Ag$ and $\psi$ is a temporal formula without quantifications. By consequence, $\varphi_A = A \psi$. By the semantics of CTL$\!^*$, $M,s \models A \psi$ iff for all paths $p$ with $p_1 = s$, $M,p \models \psi$. The formula $A \psi$ in CTL$\!^*$ is equivalent to $\all{Ag} \psi$ in ATL$\!^*$. In fact, by Definition~\ref{satisfaction}, for each joint strategy $\Sigma_{Ag}$ the resulted path needs to verify $\psi$. Since we consider all the possible joint strategies for the whole set of agents by consequence all the paths of $M$ from $s$ need to satisfy $\psi$. Then, we need to prove that: if $M,s \models \all{Ag} \psi$ then $M,s \models \all{\Gamma} \psi$. But, this follows directly by the semantics in Definition~\ref{satisfaction}. 
	
	To conclude the proof, note that if we have a formula with more strategic operators then we can use a classic bottom-up approach.
\end{proof}

Last but not least, we provide a preservation result from CTL$^*$ existential formulas to ATL$^*$ formulas.

\begin{lemma}\label{theo:ExistCTLimpliesATL}
	Given a model $M$, a formula $\varphi$ in ATL$\!^*$ written in NNF, and the CTL$^*$ existential version $\varphi_E$ of $\varphi$. For any $s \in S$, we have that: 
	\begin{center}
		$M,s \not \models \varphi_E \Rightarrow M,s \not \models \varphi$
	\end{center}	
\end{lemma}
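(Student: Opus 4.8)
The plan is to establish the contrapositive, namely $M,s \models \varphi \Rightarrow M,s \models \varphi_E$, proceeding exactly as in the proof of Lemma~\ref{theo:UnivCTLimpliesATL} but dualising every step. I would first reduce to the case of a single outermost strategic operator applied to a quantifier-free path formula $\psi$, and then lift to arbitrary $\varphi$ by the same bottom-up replacement of innermost strategic subformulas used there. Recall that the existential version replaces each strategic operator ($\naww{\Gamma}$ or $\all{\Gamma}$) by the CTL$^*$ path quantifier $E$, so both base cases yield $\varphi_E = E\psi$.

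For $\varphi = \naww{\Gamma}\psi$: if $M,s \models \naww{\Gamma}\psi$ there is a joint strategy $\Sigma_\Gamma$ with $M,p \models \psi$ for every $p \in out(s,\Sigma_\Gamma)$; since $out(s,\Sigma_\Gamma)$ is nonempty, any such $p$ is a path of $M$ from $s$ witnessing $M,s \models E\psi$. Phrased through the equivalences of Lemma~\ref{theo:UnivCTLimpliesATL}, this is just $E\psi \equiv \naww{\Ag}\psi$ together with the monotonicity $\naww{\Gamma}\psi \Rightarrow \naww{\Ag}\psi$ obtained by enlarging the coalition (extend $\Sigma_\Gamma$ arbitrarily on $\wf{\Gamma}$; the unique resulting play is one of the old outcomes, hence satisfies $\psi$). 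For $\varphi = \all{\Gamma}\psi$: by definition $\all{\Gamma}\psi = \neg\naww{\Gamma}\neg\psi$, so $M,s \models \all{\Gamma}\psi$ means that for every $\Sigma_\Gamma$ some $p \in out(s,\Sigma_\Gamma)$ satisfies $\psi$; picking any single strategy produces such a witness path and hence $M,s \models E\psi$. Equivalently, $E\psi \equiv \all{\emptyset}\psi$ and $\all{\Gamma}\psi \Rightarrow \all{\emptyset}\psi$ by shrinking the coalition. In both cases $M,s \models \varphi \Rightarrow M,s \models \varphi_E$, which is the desired contrapositive.

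A cleaner alternative that I would also record is to obtain the statement for free from Lemma~\ref{theo:UnivCTLimpliesATL}. Since every formula here is a state formula under a two-valued semantics, $M,s \not\models \varphi_E$ iff $M,s \models \neg\varphi_E$, and likewise for $\varphi$. The key identity is $\neg\varphi_E = (\neg\varphi)_A$, i.e. negating the existential CTL$^*$ version of $\varphi$ yields the universal CTL$^*$ version of the NNF of $\neg\varphi$: pushing the negation inward swaps $\naww{\Gamma}\leftrightarrow\all{\Gamma}$ (both of which the universal transform sends to $A$) and flips $\neg E = A$, while the quantifier-free part dualises identically at the LTL level ($\land\leftrightarrow\vee$, $U\leftrightarrow R$, $X\leftrightarrow X$, and atoms). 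Granting this identity, the claim $M,s \models \neg\varphi_E \Rightarrow M,s \models \neg\varphi$ is precisely Lemma~\ref{theo:UnivCTLimpliesATL} instantiated at $\neg\varphi$.

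The main obstacle in the direct route is the bottom-up step rather than the base cases: after replacing an inner strategic subformula by its existential version one must argue that the outer ATL$^*$ operator, now evaluated over the rewritten atom, still implies its existential counterpart, which requires the whole transformation to be monotone so that the single-operator implications compose, exactly as invoked (but not fully detailed) at the end of Lemma~\ref{theo:UnivCTLimpliesATL}. In the duality route the only delicate point is verifying $\neg\varphi_E = (\neg\varphi)_A$ uniformly, in particular that the NNF rewriting of $\neg\varphi$ commutes with the universal/existential transform; once this bookkeeping is settled the result is immediate.
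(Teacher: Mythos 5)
Your direct route is essentially the paper's own proof: the same reduction to the two base cases $\naww{\Gamma}\psi$ and $\all{\Gamma}\psi$ with quantifier-free $\psi$, the same equivalences $E\psi \equiv \naww{\Ag}\psi$ and $E\psi \equiv \all{\emptyset}\psi$ combined with coalition monotonicity, and the same bottom-up lifting for nested strategic operators, merely phrased in contrapositive form (which, if anything, makes the needed one-directional implication cleaner than the paper's appeal to an ``equivalence''). The duality shortcut via $\neg\varphi_E = (\neg\varphi)_A$ and Lemma~\ref{theo:UnivCTLimpliesATL} is a valid extra the paper does not pursue, but since your primary argument coincides with the paper's, nothing further is needed.
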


\begin{proof}
	First, consider the formula $\varphi = \naww{\Gamma} \psi$, in which $\Gamma \subseteq Ag$ and $\psi$ is a temporal formula without quantifications. By consequence, $\varphi_E = E \psi$. By the semantics of CTL$^*$, $M,s \not \models E \psi$ iff there is not a paths $p$ with $p_1 = s$, such that $M,p \models \psi$. As presented in~\cite{AHK02}, the formula $E \psi$ in CTL$^*$ is equivalent to $\naww{Ag} \psi$ in ATL$^*$. In fact, by Definition~\ref{satisfaction}, we take $out(s, \Sigma_{Ag})$, \textit{i.e.} only one path of $M$. Then, we need to prove that: if $M,s \not \models \naww{Ag} \psi$ then $M,s \not \models \naww{\Gamma} \psi$. But, this follows directly by the semantics in Definition~\ref{satisfaction}. 
	
	Now, consider the formula $\varphi = \all{\Gamma} \psi$, in which $\Gamma \subseteq Ag$ and $\psi$ is a temporal formula without quantifications. By consequence, $\varphi_E = E \psi$. By the semantics of CTL$^*$, $M,s \not \models E \psi$ iff there is not a path $p$ with $p_1 = s$, such that $M,p \models \psi$. The formula $E \psi$ in CTL$^*$ is equivalent to $\all{\emptyset} \psi$ in ATL$^*$. In fact, by Definition~\ref{satisfaction}, we take $out(s, \emptyset)$, \textit{i.e.} all the paths of $M$ starting from $s$, and verify if there exists a path in $out(s, \emptyset)$ satisfying $\psi$. Then, we need to prove that: if $M,s \not \models \all{\emptyset} \psi$ then $M,s \not \models \all{\Gamma} \psi$. But, this follows directly by the semantics in Definition~\ref{satisfaction}.
	
	To conclude the proof, note that if we have a formula with more strategic operators then we can use a classic bottom-up approach.
\end{proof}

Finally, we have all ingredients to show our main result.

\begin{theorem}\label{theo:soundness}
	Algorithm~\ref{alg:full} is sound: if the value returned is different from $?$, then $M \models \varphi$ iff $k = \top$.
\end{theorem}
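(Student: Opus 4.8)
The plan is to prove the two implications packaged in the statement separately, according to which branch of Algorithm~\ref{alg:CTL} fixed the value $k \neq {?}$. Since a non-$?$ value is only ever assigned inside a call to $Verification()$, either $k = \top$ was set (line 14) because $M \models \varphi_A$, or $k = \bot$ was set (line 16) because $M \not\models \varphi_E$; as these are mutually exclusive, establishing the two implications $M \models \varphi_A \Rightarrow M \models \varphi$ and $M \not\models \varphi_E \Rightarrow M \not\models \varphi$ yields the claimed biconditional. A preliminary remark disposes of Phase~0: the preprocessing of Algorithm~\ref{alg:prepro} only renames each $\neg p$ into a fresh atom $np$ whose valuation is, by construction, exactly the set of states where $p$ is false, so $M \models \varphi$ is unaffected and I may argue throughout on the preprocessed NNF formula, in which every atom and every strategic sub-formula occurs \emph{positively}.

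The crux is a semantic invariant on the atoms installed in $M$. After $CheckSub$-$formulas$, for each strategic sub-formula $\psi$ in the tree of $SubFormulas(\varphi)$, the atom $natom_\psi$ (resp.\ $patom_\psi$) copied into $M$ by lines 2--4 of Algorithm~\ref{alg:CTL} marks exactly those states $s$ with $M_n, s \models_{IR} \psi$ (resp.\ $M_p, s \models_{IR} \psi$), each evaluated under perfect information and perfect recall using the atoms already installed for the inner, previously processed sub-formulas. Combining this with Lemma~\ref{lemma:negsub} gives that wherever $natom_\psi$ holds we have $M, s \models \psi$ in the genuine imperfect-information/perfect-recall semantics, so $natom_\psi$ is a \emph{sound under-approximation} of $\psi$; combining it with the contrapositive of Lemma~\ref{lemma:possub} gives $\{s : M,s \models \psi\} \subseteq \{s : patom_\psi \in V(s)\}$, so $patom_\psi$ is a \emph{sound over-approximation} of $\psi$. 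Because $\varphi$ is in NNF these atoms, like the strategic operators they replace, occur only positively, hence the formula is monotone in them.

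For the case $k = \top$, I would run a bottom-up induction over the sub-formula tree. At a resolved leaf, the invariant gives $natom_\psi \Rightarrow M,s \models \psi$; at an unresolved strategic operator that $FromATLtoCTL$ turned into the universal quantifier $A$, Lemma~\ref{theo:UnivCTLimpliesATL} gives $A\chi \Rightarrow \naww{\Gamma}\chi$ (and $\Rightarrow \all{\Gamma}\chi$). Monotonicity of the enclosing Boolean and temporal context then lets me propagate these pointwise under-approximations outward, yielding $M \models \varphi_A \Rightarrow M \models \psi_n \Rightarrow M \models \varphi$. The case $k = \bot$ is dual: the same induction, now using the over-approximations $patom_\psi$ together with Lemma~\ref{theo:ExistCTLimpliesATL} ($\naww{\Gamma}\chi \Rightarrow E\chi$ and $\all{\Gamma}\chi \Rightarrow E\chi$), shows $M \models \varphi \Rightarrow M \models \varphi_E$, whose contrapositive is exactly $M \not\models \varphi_E \Rightarrow M \not\models \varphi$.

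The step I expect to be the main obstacle is making this bottom-up induction fully rigorous at the interface of the two independent approximations. Concretely, I must verify that the atoms written into $M$ in lines 2--4 of Algorithm~\ref{alg:CTL} coincide with the labels that were present in $M_n$ and $M_p$ when the \emph{outer} sub-formulas were model-checked under $\models_{IR}$, so that checking an outer sub-formula that mentions an inner atom genuinely computes satisfaction of the nested real formula rather than of an unrelated proposition; and I must state the monotone substitution lemma so that, at each level of nesting, it simultaneously absorbs both the sub-model approximation (Lemmas~\ref{lemma:negsub}/\ref{lemma:possub}) and the CTL$^*$-to-ATL$^*$ approximation (Lemmas~\ref{theo:UnivCTLimpliesATL}/\ref{theo:ExistCTLimpliesATL}) applied to the same occurrence. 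Once this monotone replacement principle for positive NNF formulas is in place, the two cases close by routine induction.
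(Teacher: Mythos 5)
Your proposal is correct and takes essentially the same approach as the paper's proof: the same case split on whether $k=\top$ (set because $M \models \varphi_A$) or $k=\bot$ (set because $M \not\models \varphi_E$), the same use of Lemmas~\ref{lemma:negsub} and~\ref{lemma:possub} to establish that $natom_\psi$ under-approximates and $patom_\psi$ over-approximates the true extension of $\psi$ in $M$, and the same transfer of CTL$^*$ verdicts back to ATL$^*$ via Lemmas~\ref{theo:UnivCTLimpliesATL} and~\ref{theo:ExistCTLimpliesATL}. The only substantive difference is presentational: where the paper argues by contradiction on a single added atom and dispatches the nested, multi-atom case with ``obviously, we can generalize,'' you spell out the bottom-up monotone-substitution induction over the sub-formula tree, which makes rigorous exactly the step the paper leaves implicit.
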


\begin{proof}
	
	Suppose that the value returned is different from $?$. In particular, either $k = \top$ or $k = \bot$. 
	If $M \models \varphi$ and $k = \bot$, then by Algorithm~\ref{alg:CTL} and~\ref{alg:full}, we have that $M' \not\models \varphi_E$.
	Now, there are two cases: (1) $M$ and $M'$ are the same models or (2) $M$ differs from $M'$ for some atomic propositions added to $M'$ in lines 2-4 of Algorithm \ref{alg:CTL}.
	In case (1), we know that $M$ and $M'$ are labeled with the same atomic propositions and thus $M' \not\models \varphi_E$ implies $M \not\models \varphi_E$ and, by Lemma \ref{theo:ExistCTLimpliesATL}, $M \not\models \varphi$, a contradiction. Hence, $k = \top$ as required. 
	In case (2), we know that $M$ and $M'$ differs for some atomic propositions. Suppose that $M'$ has only one additional atomic proposition $atom_\psi$, i.e. $AP' = AP \cup \{atom_\psi\}$. The latter means that Algorithm \ref{alg:check} found a positive sub-model $M_p$ in which $M_p,s \models \psi$, for some $s \in S_p$. By Lemma \ref{lemma:possub}, for all $s \in S_p$, we know that if $M_p,s \not \models \psi$ then $M,s \not\models \psi$. So, $M'$ over-approximates $M$, i.e. there could be some states that in $M'$ are labeled with $atom_\psi$ but they don't satisfy $\psi$ in $M$. Thus, if $M' \not\models \varphi_E$ then $M \not\models \varphi_E$ and, by Lemma \ref{theo:ExistCTLimpliesATL}, $M \not\models \varphi$, a contradiction. Hence, $k = \top$ as required. 
	Obviously, we can generalize the above reasoning in case $M$ and $M'$ differ for multiple atomic propositions.
	On the other hand, if $k = \top$ then by Algorithm~\ref{alg:CTL} and~\ref{alg:full}, we have that $M' \models \varphi_A$.
	Again, there are two cases: (1) $M$ and $M'$ are the same models or (2) $M$ differs from $M'$ for some atomic propositions added to $M'$ in lines 2-4 of Algorithm \ref{alg:CTL}.
	In case (1), we know that $M$ and $M'$ are labeled with the same atomic propositions and thus $M' \models \varphi_A$ implies $M \models \varphi_A$ and, by Lemma \ref{theo:UnivCTLimpliesATL}, $M \models \varphi$ as required. 
	In case (2), we know that $M$ and $M'$ differs for some atomic propositions. Suppose that $M'$ has only one additional atomic proposition $atom_\psi$, i.e. $AP' = AP \cup \{atom_\psi\}$. The latter means that Algorithm \ref{alg:check} found a negative sub-model $M_n$ in which $M_n,s \models \psi$, for some $s \in S_n$. By Lemma \ref{lemma:negsub}, for all $s \in S_n$, we know that if $M_n,s \models \psi$ then $M,s \models \psi$. So, $M'$ under-approximates $M$, i.e. there could be some states that in $M'$ are not labeled with $atom_\psi$ but they satisfy $\psi$ in $M$. Thus, if $M' \models \varphi_A$ then $M \models \varphi_A$ and, by Lemma \ref{theo:UnivCTLimpliesATL}, $M \models \varphi$ as required. 
	Obviously, we can generalize the above reasoning in case $M$ and $M'$ differ for multiple atomic propositions.
\end{proof}

\section{Our tool}\label{sec:tool}

The algorithms previously presented have been implemented in Java\footnote{ \url{https://github.com/AngeloFerrando/StrategyCTL}}. 
%
The tool expects a model in input formatted as a Json file. This file is then parsed, and an internal representation of the model and formula are generated. After the preprocessing phase, Algorithm~\ref{alg:submodels} is called to extract the sub-models of our interest. The verification of a sub-model against a sub-formula is achieved by translating the sub-model into its equivalent ISPL (Interpreted Systems Programming Language) program, which is then verified by using the model checker MCMAS\footnote{\url{https://vas.doc.ic.ac.uk/software/mcmas/}}\cite{LomuscioRaimondi06c}. This corresponds to the verification steps for Algorithm~\ref{alg:check} at lines 6 and 9, and Algorithm~\ref{alg:CTL} at lines 14 and 16.
The entire manipulation, from parsing the model formatted in Json, to translating the latter to its equivalent ISPL program, has been performed by extending an existent Java library \cite{StrategicTool}; the rest of the tool derives directly from the algorithms presented in this paper. 
%

\vspace{-0.5em}
\subsection{Experiments} 

We have tested our tool over the rover's mission, on a machine with the following specifications: Intel(R) Core(TM) i7-7700HQ CPU @ 2.80GHz, 4 cores 8 threads, 16 GB RAM DDR4.
%
By Considering the model of Figure~\ref{fig:curiosity} and the three formulas $\varphi_1$, $\varphi_2$ and $\varphi_3$ presented in the same section, Algorithm~\ref{alg:submodels} finds 3 sub-models with perfect information (precisely, 3 negative and 3 positive sub-models). Such sub-models are then evaluated by applying first Algorithm~\ref{alg:check}, and then Algorithm~\ref{alg:CTL} (as shown in Algorithm~\ref{alg:full}). The resulting implementation takes 0.2 [sec] to conclude the satisfaction of $\varphi_2$ and the violation of $\varphi_1$ and $\varphi_3$ on the model (empirically confirming our expectations). 
The example of Figure~\ref{fig:curiosity} is expressly small, since it is used as running example to help the reader to understand the contribution. However, the actual rover example we tested on our tool contains hundreds of states and dozens of agents (we have multiple rovers and mechanics in the real example). Such more complex scenario served us as a stressed test to analyse the performance of our tool for more realistic MAS. Since Algorithm~\ref{alg:full} is $2EXPTIME$, when the model grows, the performance are highly affected. Nonetheless, even when the model reaches 300 states, the tool concludes in $\sim$10 [min].
Other than the rover's mission, our tool has been further experimented on a large set of automatically and randomly generated iCGSs. The objective of these experiments have been to show how many times our algorithm has returned a conclusive verdict. For each model, we have run our procedure and counted the number of times a solution has been returned. Note that, our approach concludes in any case, but since the general problem is undecidable, the result might be inconclusive (i.e. $?$). In Figure~\ref{fig:experiments}, we report our results by varying the percentage of imperfect information (x axis) inside the iCGSs, from $0\%$ (perfect information, i.e. all states are distinguishable for all agents), to $100\%$ (no information, i.e. no state is distinguishable for any agent). For each percentage selected, we have generated $10000$ random iCGSs and counted the number of times our algorithm has returned with a conclusive result (i.e. $\top$ or $\bot$). As it can be seen in Figure~\ref{fig:experiments}, our algorithm has returned a conclusive verdict for almost the $80\%$ of the models analysed (y axis). It is important to notice how this result has not been influenced (empirically) by the percentage of imperfect information added inside the iCGSs. In fact, the accuracy of the algorithm is not determined by the number of indistinguishable states, but by the topology of the iCGS and the structure of the formula under exam. In order to have pseudo-realistic results, the automatically generated iCGSs have varied over the number of states, the complexity of the formula to analyse, and the number of transitions among states. More specifically, not all iCGSs have been generated completely randomly, but a subset of them has been generated considering more interesting topologies (similar to the rover's mission). This has contributed to have more realistic iCGSs, and consequently, more realistic results.  
The results obtained by our experiments using our procedure are encouraging. Unfortunately, no benchmark of existing iCGSs -- to test our tool on -- exists, thus these results may vary on more realistic scenarios. Nonetheless, considering the large set of iCGSs we have experimented on, we do not expect substantial differences.

\begin{figure}
	\centering
	\scalebox{0.8}{
		\includegraphics[width=\linewidth]{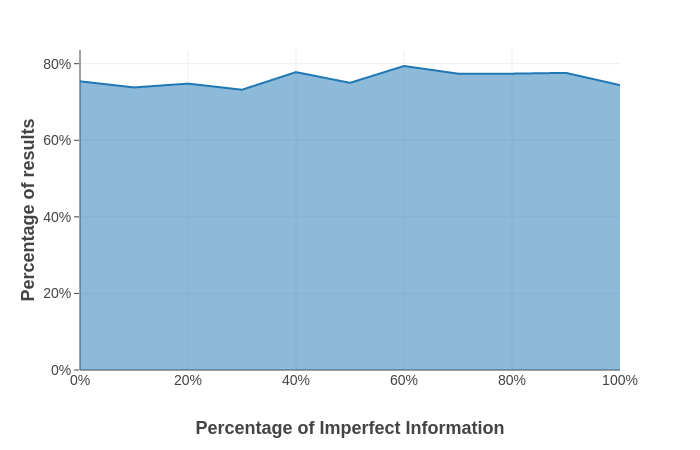}
	}
	\caption{Experimental results.}
	\label{fig:experiments}
\end{figure}

\section{Conclusions and Future work}


In this paper we have proposed a procedure to overcome the issue in employing logics for strategic reasoning in the context of MAS under perfect recall and imperfect information, a problem in general undecidable. Specifically, we have shown how to generate the sub-models in which the verification of strategic objectives is decidable and then used CTL$\!^*$ model checking to provide a verification result. We have shown how the entire procedure has the same complexity of ATL$\!^*_{IR}$ model checking. 
We have also implemented our procedure by using MCMAS and provided encouraging experimental results.
%

In future work, we intend to extend our procedure to increase the types of games and specifications that we can cover. In fact, we recall that our procedure is sound but not complete. It is not possible to find a complete method since, in general, model checking ATL in the context of imperfect information and perfect recall strategies is proved to be undecidable \cite{DimaT11}. In this work we have considered how to remove the imperfect information from the models to find decidability, but it would be interesting to consider the other direction that produces undecidability, i.e. the perfect recall strategies. In particular, we could remove the perfect recall strategies to generate games with imperfect information and imperfect recall strategies, a problem in general decidable, and then use CTL model checking to provide a verification result. 
%
Additionally, we plan to extend our techniques here developed to more expressive languages for strategic reasoning including Strategy Logic \cite{ChatterjeeHenzingerPitterman07,DBLP:journals/tocl/MogaveroMPV14}.


\bibliographystyle{unsrt}
\bibliography{new_bib}

\end{document}